\def\PNP{P${}={}$NP}
\def\hb@xt@{\hbox to }
\let\oldendproof\endproof
\def\endproof{\qed\oldendproof}
\begin{document}

\title{Paired Approximation Problems and Incompatible Inapproximabilities} 

\author{David Eppstein}

\institute{Computer Science Department\\
School of Information \& Computer Science\\
University of California, Irvine\\
\email{eppstein@uci.edu}}

\maketitle   

\begin{abstract}
This paper considers pairs of optimization problems that are defined from a single input and for which it is desired to find a good approximation to either one of the problems. In many instances, it is possible to efficiently find an approximation of this type that is better than known inapproximability lower bounds for either of the two individual optimization problems forming the pair. In particular, we find either a $(1+\epsilon)$-approximation to $(1,2)$-TSP or a $1/\epsilon$-approximation to maximum independent set, from a given graph, in linear time. We show a similar paired approximation result for finding either a coloring or a long path. However,  no such tradeoff exists in some other cases: for set cover and hitting set problems defined from a single set family, and for clique and independent set problems on the same graph, it is not possible to find an approximation when both problems are combined that is better than the best approximation for either problem on its own.
\end{abstract}

\section{Introduction}
Many classical optimization problems are now known to be hard to approximate, among them
the maximum independent set of a graph and the $(1,2)$-Traveling Salesman Problem (specified by an unweighted undirected graph, with adjacent vertices at distance one from each other and nonadjacent vertices at distance two): unless \PNP, the $(1,2)$-TSP cannot be approximated within a factor better than $1+\frac{1}{740}$~\cite{EngKar-ICALP-01} and the maximum independent set cannot be approximated within a factor better than $n^{1-\epsilon}$ for any constant $\epsilon>0$~\cite{Has-AM-99,Zuc-STOC-06}. However, what if both problems are specified by the same input graph, and we would be happy to find a good approximate solution to either one? As we show, this \emph{paired approximation problem} may be efficiently approximated more accurately than either problem may be on its own: a simple linear time approximation algorithm finds for any $\epsilon>0$ either a $(1+\epsilon)$-approximate solution to the $(1,2)$-TSP or a $1/\epsilon$-approximate solution to the maximum independent set.

This result may seem a mere curiosity: in what application would we be satisfied by either one of two such seemingly unrelated outputs? However, this precise problem came up in recent work of the author with Cabello and Klav{\v z}ar on isometric embedding of graphs~\cite{CabEppKla}: using a subroutine that can be interpreted as finding either a short $(1,2)$-TSP or a large independent set of an auxiliary graph derived from the input, we developed a polynomial-time approximation scheme for isometrically embedding simplex graphs into Fibonacci cubes of low dimension. In this problem, an approximate $(1,2)$-TSP may be used directly to construct an embedding, while a large independent set may be used to show that the derived graph has size logarithmic in the input size, allowing the use of exponential-time TSP algorithms.

Beyond this application, the general phenomenon of inapproximability results losing their strength when combined in pairs provides an interesting test on the power of inapproximability theory.
To this end, we detail additional instances in which two classical and difficult-to-approximate optimization problems are defined from a common input. We show that many pairs of inapproximability proofs are \emph{incompatible}: the hard instances for one of the two paired problems are disjoint from the hard instances for the other, so that an approximation algorithm may find a solution to one or the other problem that is better than the known inapproximability bounds for either problem. For other pairs of problems, however, the paired problem is as hard to approximate as the individual problems from which it is formed.

We take as a model for our results the \emph{Hadwiger conjecture} that every graph that requires $k$ or more colors has a $k$-vertex clique minor~\cite{Had-VNgZ-43}. Although much work on this conjecture treats $k$ as constant, we consider $k$ to be a variable, and we interpret the conjecture as a tradeoff between two graph optimization problems: when a graph is worse as an instance of a coloring problem (that is, it requires more colors) it becomes better as an instance of a clique-finding problem (it has larger clique minors), and vice versa. However, for the purposes of approximation algorithms Hadwiger's conjecture is problematic because it is unproven and because known algorithms for finding either a coloring or a clique minor for a given $k$ have a running time that depends badly on~$k$~\cite{KawRee-STOC-09}. For the paired approximation problems that we study we do not need to know the tradeoff in solution sizes at the level of precision provided by Hadwiger's conjecture: it suffices to prove a weaker tradeoff that can be implemented by fully polynomial-time algorithms.  

Our algorithms are all simple and easily implementable; our intent is less to come up with sophisticated new approximation techniques, and more to explore the implications and limitations of the single-problem approximation ratio used as a standard basis for the theoretical analysis of heuristics for hard problems.

\subsection{New Results}

By considering the number of leaves in a depth-first search tree of an input graph, we prove:

\medskip\noindent
\textbf{Independent set and $(1,2)$-TSP.}
Let $\alpha(G)$ denote the independence number of graph $G$, and let $L_{1,2}(G)$ denote the length of the optimal solution to the $(1,2)$-TSP problem defined from $G$. Then we show that $L_{1,2}(G)-\alpha(G)\le n$. The result is constructive: in linear time we can find a cycle with length $L$, and an independent set with size $I$, such that $L-I=n$. As a consequence, it is possible, given a graph $G$ and a parameter $\epsilon>0$, to find in linear time either a $(1+\epsilon)$-approximation to the $(1,2)$-TSP or a $1/\epsilon$-approximation to the independent set. The same idea provides paired approximations for the minimum leaf spanning tree~\cite{SalWie-IPL-07} and independent sets, or for independent sets and the minimum number of paths in a path cover.

\medskip\noindent
By considering the height of a depth-first search tree of an input graph, we prove:

\medskip\noindent
\textbf{Coloring and longest path.}
If $\chi$ is the chromatic number of a graph, and $\ell(G)$ is the number of vertices in its longest path, then $\ell(G)\ge\chi(G)$. In linear time we can find a coloring with $c\ge\chi(G)$ colors, and a $c$-vertex path in $G$. Thus it is possible, given a graph $G$ and a parameter $0<\epsilon<1$, to find in linear time either an $n^{\epsilon}$-approximate longest path or an $n^{1-\epsilon}$-approximate coloring.

\medskip
One may combine height and number of leaves in a single tradeoff, using the observation that the product of the height and number of leaves in any tree is less than $n$. For undirected graphs, this leads only to weak tradeoffs between independent sets and longest paths. However, for directed graphs, we find a paired approximation between the directed longest path problem and the problem of finding the largest vertex set that induces an acyclic subgraph (a complementary problem to the more well-studied feedback vertex set).
Another paired approximation result related to Hadwiger's conjecture follows easily from known techniques:

\medskip\noindent
\textbf{Coloring and clique minor.}
The original tradeoff between graph coloring and clique minors is Hadwiger's conjecture: in any graph with chromatic number $\chi$ there exists a $\chi$-vertex clique minor. Although this remains unproven, standard greedy coloring methods and known algorithms for finding cliques in dense graphs can be combined to result in an algorithm that finds in polynomial time a coloring with chromatic number $c\ge\chi$, and a clique minor with $\Omega(c/\sqrt{\log n})$ vertices. As a consequence, we show that it is possible, given a graph $G$ and a parameter $0<\epsilon<\frac12$, to find in polynomial time either an $O(n^{\epsilon}\log^{1/4} n)$-approximate solution to the clique minor problem or an $O(n^{1-\epsilon}\log^{1/4} n)$-approximate solution to the graph coloring problem. (For $\epsilon\ge \frac12$, this is not better than known approximations for the clique minor problem alone~\cite{AloLinWah-TCS-07}.)

\medskip
We also prove the following negative results.

\medskip\noindent
\textbf{Set cover and hitting set.}
The set cover and hitting set problems are both defined on a family of sets; in the set cover problem, the task is to find a subfamily of as few sets as possible with the same union, while in the hitting set problem the task is to find a set of elements that has a nonempty intersection with every set in the family. With the same complexity assumption as Feige~\cite{Fei-JACM-98}, for all $\epsilon>0$, it is impossible for a polynomial-time algorithm to approximate the paired approximation problem with an approximation ratio better than $(1-\epsilon)\ln n$, where $n$ is the total size of all the sets in the instance. The result is proved by an approximation-preserving reduction from set cover. Our reduction can be made to work for any family of hard instances of the set cover problem, but in order to achieve the best constant factor in the new inapproximability result it is necessary to depend on some details of Feige's previous reduction.

\medskip\noindent
\textbf{Clique and independent set.}
Intuitively, one might expect a good paired approximation for the maximum clique and maximum independent set problems in the same graph, as a dense graph will have no large independent set while a sparse graph will have no large clique. However, again we prove a negative result: unless \PNP, for any $\epsilon>0$,  it is impossible for a polynomial-time algorithm to approximate the paired approximation problem with an approximation ratio better than $n^{1-\epsilon}$. Our proof combines ideas from the previous clique inapproximability proofs with a deterministic construction of a Ramsey graph that avoids both large independent sets and large bicliques; this construction may be of independent interest.

\medskip\noindent
\textbf{TSP and MaxTSP.}
We show for an explicit constant $\epsilon>0$ that, unless \PNP, it is impossible to approximate the paired problem of the TSP and the Maximum TSP within a factor better than $1+\epsilon$. The reduction uses $(1,2)$ metrics, and works as well for the hardness of a paired problem in which two $(1,2)$-TSP instances are determined by a graph and its complement.

\subsection{Related Work}

Many standard results relate two hard optimization problems on the same input: for instance, in any graph, the sizes of a minimum vertex cover and of a maximum independent set sum to the number of vertices. However, for the paired approximation problems we study, this is not the right type of tradeoff: it says that if an instance has a better maximum independent set, it also has a better vertex cover. Our results need tradeoffs in which an improvement in one problem is always balanced against a disimprovement in the other.

Beyond Hadwiger's conjecture itself and the large body of research surrounding it, we are unaware of much past work that could be recognized as a paired approximation algorithm of the type we study here. One exception is the work of Boppana and Halld\'orsson on cliques and independent sets~\cite{BopHal-BIT-92}: they describe a polynomial-time algorithm that finds a clique and an independent set the product of the sizes of which is $\Omega(\log^2 n)$, allowing for a weak tradeoff in approximation quality between the two problems (weaker than the single-problem approximation ratio that they prove, based on this result, in the same paper).
Another work that can be interpreted as solving a paired optimization problem is an algorithm of Bodlaender~\cite{Bod-Algs-93} that uses depth-first search to find either a long path in a graph or a low-width tree-decomposition; if a tree-decomposition is found, dynamic programming can then be used to find long paths. Our algorithms use a depth-first-search based approach inspired by this idea.

In the remainder of this section, we summarize briefly the known inapproximability bounds (based on the theory of probabilistically checkable proofs~\cite{AroSaf-JACM-98}) for the problems we study.

\medskip\noindent
\textbf{$(1,2)$-TSP}. The $(1,2)$-TSP problem is a special case of the traveling salesman problem in which all pairwise distances are either one or two. An input for this problem may be given as an undirected graph, with pairs of vertices at distance one represented by edges and pairs at distance two represented by non-edges; the task is to find a cyclic ordering of the vertices minimizing the sum of distances between adjacent vertices in the ordering. A nearly-equivalent problem (differing only when the input graph is Hamiltonian) is to find a collection of disjoint paths, covering all the vertices of the graph and minimizing $n$ plus the total number of paths. The $(1,2)$-TSP was the version of the TSP used to prove its NP-completeness by Garey and Johnson~\cite{GarJoh-79}, and one of the problems mentioned as being MAXSNP-hard in the paper~\cite{PapYan-JCSS-91} originally defining the class MAXSNP of approximation problems (a class now known to be approximable to within a constant factor but not better~\cite{AroLunMot-JACM-98}). Although some classes of graphs define easy-to-approximate TSP instances~\cite{Kle-FOCS-05}, it is impossible to approximate $(1,2)$-TSP for general graphs in polynomial time with an approximation ratio better than $1+\frac{1}{740}$, unless \PNP~\cite{EngKar-ICALP-01}. The best known polynomial approximation for this problem has approximation ratio $8/7$~\cite{BerKar-SODA-06}, improving a previous $7/6$ bound of Papadimitriou and Yannakakis~\cite{PapYan-MOR-93}.

\medskip\noindent
\textbf{Clique and independent set}. An independent set in an undirected graph is a subset of the vertices such that no edge has both endpoints in the subset. Finding a maximum independent set was one of Karp's original 21 NP-complete problems~\cite{Kar-CCC-72}.  For optimization and approximation purposes it is essentially the same as the problem of finding the maximum clique in a complementary graph. The best known polynomial-time approximation ratio for this problem is $O(n(\log\log n)^2/\log^3 n)$~\cite{Fei-SJDM-04}. Such near-linear approximation ratios are the best possible: unless \PNP, the maximum independent set cannot be approximated within a factor better than $n^{1-\epsilon}$ for any constant $\epsilon>0$~\cite{Has-AM-99,Kho-FOCS-01,Zuc-STOC-06}.

\medskip\noindent
\textbf{Coloring}. Graph coloring is another of Karp's 21 problems. In an early work on approximation algorithms, Johnson~\cite{Joh-SEC-74} showed that a greedy algorithm can be used to approximate the chromatic number of a graph with an approximation ratio of $O(n/\log n)$. Although better approximations are known when the chromatic number is small~\cite{KarMotSud-JACM-98}, the current best upper bound on the approximation ratio of a polynomial-time graph coloring algorithm for arbitrary graphs is not much better, $O(n/\log^2 n)$~\cite{BopHal-BIT-92}. Unless \PNP, graph coloring is hard to approximate with an approximation ratio better than $n^{1-\epsilon}$, for any $\epsilon>0$~\cite{FeiKil-CC-96,Kho-FOCS-01,Zuc-STOC-06}. 

\medskip\noindent
\textbf{Clique minor}. Long studied from the graph theoretic point of view, the size of the largest clique in a graph (its Hadwiger number) has not attracted as much attention from the point of view of computational complexity as the other problems described here. It was not until recently that it was even proven NP-complete~\cite{Epp-09}. Alon et al.~\cite{AloLinWah-TCS-07} observe that an algorithm of Kostochka~\cite{Kos-Comb-84} for finding a clique minor whose size is related to the density of the given graph may be used as an approximation algorithm, achieving an approximation ratio of $O(\sqrt n)$. On the lower bound side, Wahlen~\cite{Wah-TCS-09} shows that, unless \PNP, there can be no polynomial-time approximation scheme for the problem.

\medskip\noindent
\textbf{Longest paths}. Longest paths in directed or undirected graphs have been much studied, but as Bj\"orklund et al.~\cite{BjoHusKha-ICALP-04} write, ``this problem is notorious for the difficulty of understanding its approximation hardness.'' Undirected longest paths are NP-hard based on an easy reduction from Hamiltonian paths~\cite{GarJoh-79}, and the directed case is impossible to approximate to within a factor smaller than $n^{1-\epsilon}$, for any $\epsilon>0$, unless \PNP~\cite{BjoHusKha-ICALP-04}. With stronger assumptions the directed problem is hard to approximate to within a factor of $n/\log^{2+\epsilon} n$ for any $\epsilon>0$~\cite{BjoHusKha-ICALP-04}. Color coding~\cite{AloYusZwi-JACM-95} provides an approximation for the directed longest paths with approximation ratio $O(n/\log n)$. The known inapproximability results for undirected longest paths are relatively weak~\cite{KarMotRam-Algo-97}, but the best known polynomial-time approximation algorithm for this case achieves an approximation ratio of only $O(n(\log\log n/\log n)^2)$~\cite{BjoHus-SJC-03}, so the undirected problem may be similarly difficult to approximate.

\medskip\noindent
\textbf{Set cover and hitting set}. Set cover (another of Karp's 21 problems) and the hitting set problem are superficially different (one asks for a subfamily of sets, the other a single set) but both can be described in the same way. One may represent a set family using an incidence matrix, a 0-1 matrix with the rows indexed by elements, the columns indexed by sets, and a 1 in every entry that corresponds to an element and a set containing that element; then set cover is the problem of finding a small set of columns the sum of which is entirely nonzero, while hitting set is the problem of finding a small set of rows the sum of which is entirely nonzero. Transposing the matrix produces a dual instance in which the two problems exchange roles, so they are the same for purposes of approximation. A greedy algorithm that at each step chooses a set that covers the largest possible number of remaining uncovered elements will find an approximation to an $n$-element set cover problem with approximation ratio $\ln n-\ln\ln n+\Theta(1)$~\cite{Chv-MOR-79,Joh-JCSS-74,Lov-DM-75,Sla-Algs-97}. Feige~\cite{Fei-JACM-98} proves (with a complexity-theoretic assumption) that, for any $\epsilon>0$, no polynomial algorithm can approximate set cover better than $(1-\epsilon)\ln n$. The set family constructed by Feige's reduction is sparse, in the sense that each element belongs to $o(n)$ sets of the family, so the same $(1-\epsilon)\ln n$ inapproximability bound holds also when $n$ measures the sum of the sizes of the sets in the set family rather than the number of elements.

\section{Leaves of the depth-first search forest}

Suppose that we are given as input a graph in which we would like to find either a small $(1,2)$-TSP tour or a large independent set. As we now show, both problems can be approximated using depth-first search. The solutions depend oppositely on the number of leaves of the DFS forest: more leaves lead to worse tours and better independent sets. This tradeoff between the two problems leads to our approximation bounds.

\begin{figure}[t]
\centering\includegraphics[width=4in]{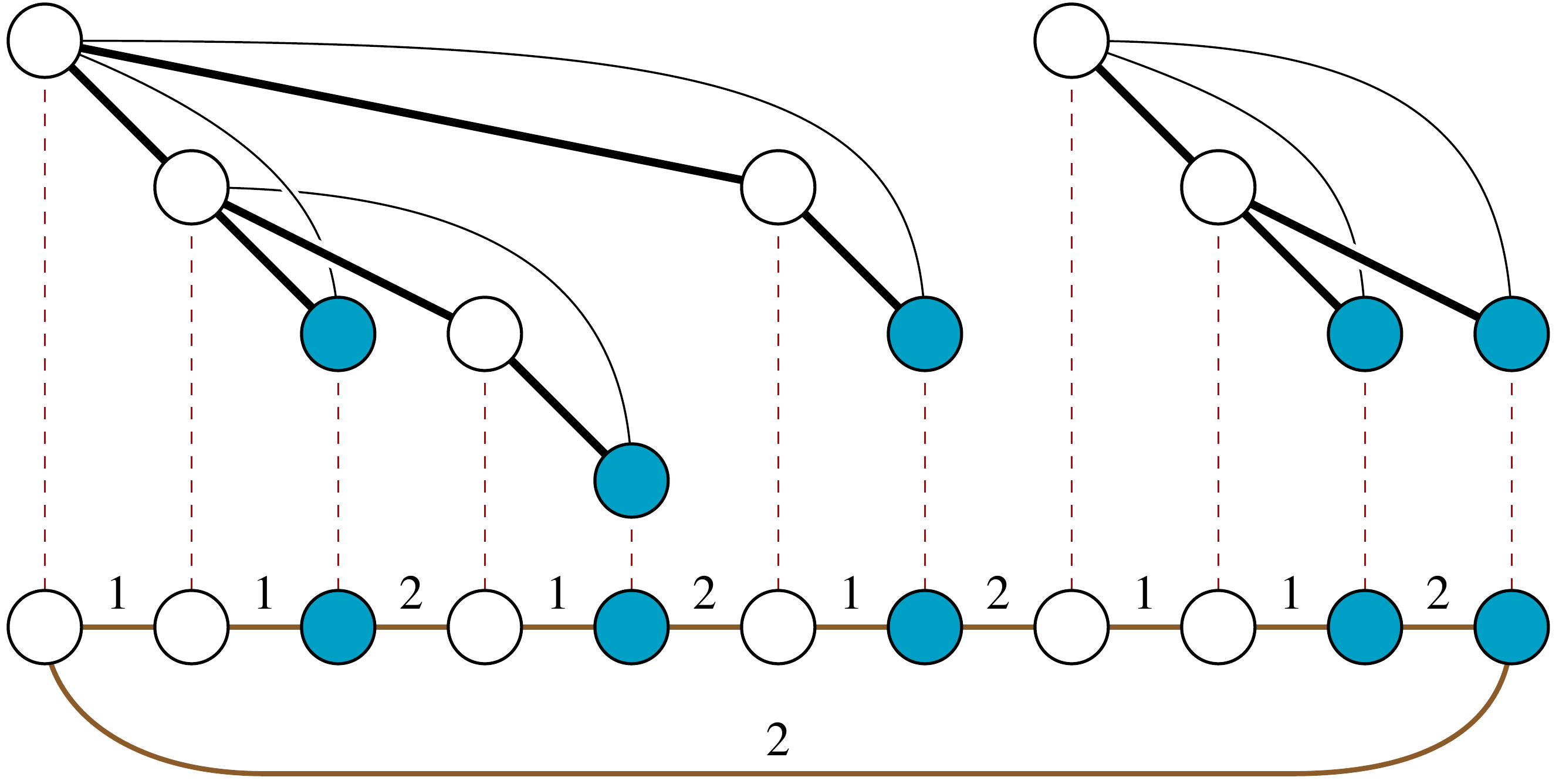}
\caption{Preorder traversal of a DFS forest forms a $(1,2)$-TSP tour in which the first vertices of the length-2 steps form an independent set of DFS leaves.}
\label{fig:dfs-tour}
\end{figure}

\begin{theorem}
In linear time, given an $n$-vertex graph $G$, it is possible to find a tour of length $L$ through the vertices of $G$ and an independent set of $I$ vertices in $G$, such that $L-I\le n$.
\end{theorem}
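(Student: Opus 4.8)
The plan is to compute a depth-first search forest $F$ of $G$ in linear time, to output the preorder vertex sequence of $F$, closed into a cycle, as the tour, and to output the set of leaves of $F$ as the independent set. Two facts drive the argument, and I would establish them separately: that the leaves of $F$ form an independent set, and that the preorder tour has length at most $n$ plus the number of those leaves.

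For the first fact, I would invoke the standard property that in a depth-first search of an undirected graph every edge is either a tree edge or a back edge joining a vertex to one of its ancestors. If two leaves $u,v$ of $F$ were adjacent in $G$, then $uv$ would have to be a tree edge — forcing one of $u,v$ to be the parent of the other, contradicting leafhood — or a back edge — forcing one to be a proper ancestor of the other, which again a leaf cannot be. Hence no two leaves of $F$ are adjacent, so the reported independent set has size $I=\lambda$, where $\lambda$ denotes the number of leaves of $F$.

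For the second fact, I would charge each of the $n$ steps of the cyclic tour to the vertex it leaves. If a vertex $v$ is not a leaf of $F$, the vertex immediately following $v$ in preorder is the first child of $v$, so that step traverses a tree edge of $G$ and has length $1$; consequently, every length-$2$ step leaves a leaf of $F$, and since distinct steps leave distinct vertices there are at most $\lambda$ such steps. This includes the wraparound edge of the cycle, because the last vertex listed in a preorder traversal is necessarily a leaf, and it also covers the transitions between the trees of $F$ when $G$ is disconnected, because the last vertex of each tree is a leaf of that tree. Therefore the tour length satisfies $L\le n+\lambda$, and combining with $I=\lambda$ gives $L-I\le(n+\lambda)-\lambda=n$.

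I expect the routine parts (the depth-first search and the preorder listing) to be completely standard, and I do not anticipate a genuine obstacle; the one point needing care is the charging argument for the tour length, specifically verifying that the single wraparound edge and the inter-tree transitions in the disconnected case are each accounted for by a leaf, which the observation above about last-visited vertices settles cleanly. Linear time is immediate: one never needs to test adjacency between non-consecutive vertices, since counting every non-tree-edge step as having length $2$ already yields the bound $L\le n+\lambda$. (As a sanity check, the same construction gives the stated inequality $L_{1,2}(G)-\alpha(G)\le n$, because the produced tour is no shorter than the optimal $(1,2)$-TSP tour and the produced independent set is no larger than $\alpha(G)$.)
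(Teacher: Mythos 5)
Your proposal is correct and follows the same route as the paper: a depth-first search forest, its preorder traversal as the tour, and its leaves as the independent set, with the length-$2$ steps of the tour charged to leaves. The extra care you take with the wraparound edge and the inter-tree transitions is sound and just fills in details the paper leaves implicit.
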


\begin{proof}
Perform a depth-first search in $G$, let the tour be a preorder traversal of the depth-first search forest, and let the independent set be the set of leaves of the depth-first search forest (Figure~\ref{fig:dfs-tour}). A depth-first search forest has the property that the leaves form an independent set: it is not possible for both endpoints of an edge $uv$ of $G$ to be leaves, because whichever of $u$ and $v$ is visited first by the DFS must be the root of a subtree containing the other of the two vertices. The preorder traversal follows edges of the depth-first search tree except after each leaf, so the number of length-2 steps in the tour is at most $I$ and the result follows.
\end{proof}

A little more strongly, the equality $L-I=n$ holds except in the case that $L=n$ and $I=1$. An alternative proof, using a technique from \cite{CabEppKla}, is to consider the edges of the graph in an arbitrary order, adding each edge to a greedily constructed collection of paths if it does not complete a non-Hamiltonian cycle or form a claw, and then to form an independent set by choosing one endpoint per path.

\begin{corollary}
In any $n$-vertex graph, if $L^*$ and $I^*$ denote the qualities of the optimal solution to the $(1,2)$-TSP and the independent set problem defined from the graph, then $L^*-I^*\le n$.
\end{corollary}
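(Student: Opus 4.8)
The plan is to obtain the corollary as an immediate consequence of the theorem via a monotonicity argument, so no new construction is needed. The theorem supplies a concrete feasible tour of length $L$ and a concrete independent set of size $I$ with $L-I\le n$. The first step is to compare these quantities to the optima: because $L^*$ is the minimum length over all tours, the tour produced by the theorem gives $L^*\le L$; because $I^*$ is the maximum size over all independent sets, the set produced by the theorem gives $I\le I^*$.

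The second step is to combine the three inequalities. Subtracting, $L^*-I^*\le L-I^*\le L-I\le n$: the first inequality replaces $L^*$ by the larger $L$, the second replaces $I^*$ by the smaller $I$, and the last invokes the theorem. Hence $L^*-I^*\le n$, as claimed.

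The only point requiring care — and it is the ``hard part'' only in the most nominal sense — is getting the directions of optimality right: the $(1,2)$-TSP is a minimization problem while independent set is a maximization problem, so passing from the algorithmically produced pair $(L,I)$ to the optimal pair $(L^*,I^*)$ can only increase the difference $L-I$, never decrease it. Equivalently, one may observe that $L-I$ for \emph{any} simultaneously feasible tour and independent set is an upper bound on $L^*-I^*$, and the theorem exhibits one such pair meeting the bound $n$.
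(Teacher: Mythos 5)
Your argument is correct and is exactly the deduction the paper intends: since $L^*\le L$ (minimization) and $I^*\ge I$ (maximization), the constructed pair from the theorem immediately yields $L^*-I^*\le L-I\le n$. The paper leaves this step implicit, and your careful attention to the directions of the two optimizations is precisely the only content of the corollary.
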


This tradeoff is tight: for any $I>0$ and any $n\ge I$ one can find an $n$-vertex graph with independence number $I$ and with optimal tour length $n+I$, by choosing the graph to be a disjoint union of $I$ cliques.

\begin{theorem}
Given a graph $G$ with $n$ vertices and $m$ edges, and an input parameter $\epsilon>0$, it is possible to find in time $O(n+m)$ either a tour of length $L$ through the vertices of $G$ that approximates the optimal tour to within a factor of $1+\epsilon$, or an independent set in $G$ that approximates the optimal independent set to within a factor of $1/\epsilon$.
\end{theorem}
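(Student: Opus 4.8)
The plan is to run the linear-time procedure from the previous theorem to obtain a tour of length $L$ and an independent set of size $I$ satisfying $L-I\le n$, and then to decide which of the two objects to return by a single threshold test: compare $I$ against $\epsilon n$. Before that, I would record the two scale-free bounds that make the argument go through. First, any tour visits all $n$ vertices and each of its $n$ steps has length at least $1$ in the $(1,2)$ metric, so the optimal tour length satisfies $L^*\ge n$. Second, any independent set has at most $n$ vertices, so the optimal independent set satisfies $I^*\le n$.

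Now the two-case analysis. If $I\ge\epsilon n$, then using $I^*\le n$ we get $I\ge\epsilon n\ge\epsilon I^*$, i.e. $I^*/I\le 1/\epsilon$, so the independent set found is a $1/\epsilon$-approximation, and we return it. Otherwise $I<\epsilon n$, and then $L\le n+I<(1+\epsilon)n\le(1+\epsilon)L^*$, so the tour found is a $(1+\epsilon)$-approximation, and we return that instead. In both cases the running time is that of the depth-first search plus the preorder traversal, namely $O(n+m)$.

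The one point deserving a sentence of care, rather than any real work, is the admissible range of $\epsilon$. The claimed bound $1/\epsilon$ for the independent set is only meaningful when $\epsilon<1$; for $\epsilon\ge 1$ it would amount to demanding an exactly optimal independent set. But in that regime the TSP side is trivial anyway: an arbitrary Hamiltonian tour has length at most $2n\le 2L^*\le(1+\epsilon)L^*$, so one can simply output any tour. Hence there is essentially no obstacle to the proof — it is a direct repackaging of the tradeoff $L-I\le n$ together with the bounds $L^*\ge n$ and $I^*\le n$ — and the only thing to handle cleanly is stating the $\epsilon\ge 1$ boundary case.
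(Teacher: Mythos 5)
Your proof is correct and is essentially the paper's own argument: compute the DFS-based tour and leaf independent set with $L-I\le n$, then threshold (the paper tests $L\le n(1+\epsilon)$, you equivalently test $I\ge\epsilon n$) and justify each branch via $L^*\ge n$ and $I^*\le n$. The extra remark about $\epsilon\ge 1$ is a harmless refinement not present in the paper.
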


\begin{proof}
Compute the tour and independent set as described above. If $L\le n(1+\epsilon)$, then return the tour. Otherwise, $I\ge\epsilon n$; in this case, return the independent set.
\end{proof}

The same technique produces a cover of the graph by $I$ paths, and a spanning tree of the graph with $I$ leaves. An algorithm that chooses to return an independent set if $I\ge n^{1-\epsilon}$ and to return a path cover or spanning tree otherwise will achieve an approximation ratio of $n^\epsilon$ for the independent set problem or an approximation ratio of $n^{1-\epsilon}$ for the path cover or minimum-leaf spanning tree problems.

\section{Height of the depth-first forest}

\begin{figure}[t]
\centering\includegraphics[width=2.5in]{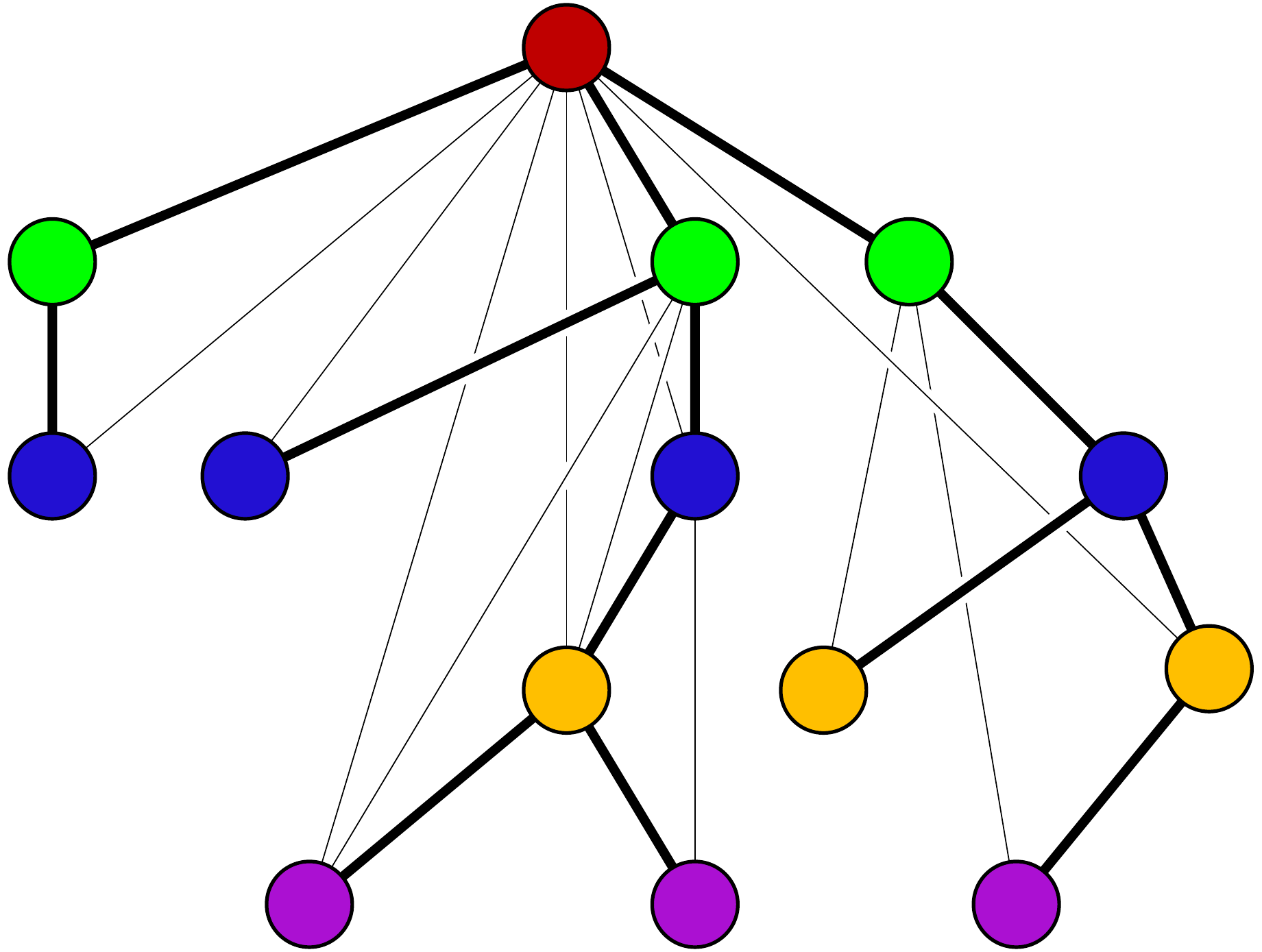}
\caption{Partitioning the vertices of a graph according to the levels of its DFS forest produces a proper coloring.}
\label{fig:dfs-color}
\end{figure}

\begin{theorem}
In linear time, given an $n$-vertex graph $G$, it is possible to find a coloring of the graph and a path in the graph such that each color class of the coloring contains exactly one path vertex.
\end{theorem}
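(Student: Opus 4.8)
The plan is to reuse the depth-first search forest, this time partitioning the vertices according to their depth (level) in the forest rather than according to whether they are leaves, as suggested by Figure~\ref{fig:dfs-color}. First I would run a depth-first search in $G$, recording for each vertex $v$ its level $\ell(v)$, defined as the number of tree edges on the path from the root of $v$'s tree down to $v$; this takes linear time. I would then color each vertex $v$ with the color $\ell(v)$.

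Next I would check that this is a proper coloring. The key fact is that a depth-first search of an undirected graph produces no cross edges: every edge of $G$ is either a tree edge of the forest or a back edge joining a vertex to one of its proper ancestors. A tree edge joins vertices at consecutive levels, and a back edge joins an ancestor--descendant pair, whose levels are necessarily distinct; in either case the two endpoints receive different colors, so no edge is monochromatic.

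Finally I would produce the path. Let $h$ be the maximum level attained by any vertex of $G$ and let $v$ be a vertex with $\ell(v)=h$; the path I return is the sequence of tree edges from the root of $v$'s tree down to $v$, which is a path in $G$ since tree edges are edges of $G$. This path contains exactly one vertex at each level $0,1,\dots,h$. Every level in this range is occupied, since a vertex at level $h$ has ancestors at all smaller levels; hence the color classes are exactly the sets of vertices at levels $0$ through $h$, and the path meets each of them in precisely one vertex. All steps run in time $O(n+m)$, so the theorem follows.

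I do not expect a real obstacle here. The only points that require care are the absence of cross edges in undirected depth-first search, which is what makes the level partition a proper coloring, and the observation that the path must be chosen to terminate at a \emph{deepest} vertex rather than at an arbitrary leaf, since that is what forces it to hit every color class exactly once.
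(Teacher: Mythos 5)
Your proposal is correct and is essentially the paper's proof: color each vertex by its DFS level and return a longest root-to-leaf path of the DFS forest, with the absence of cross edges in undirected DFS guaranteeing the coloring is proper. The extra details you supply (why back/tree edges are bichromatic, why the path must end at a deepest vertex) are exactly the justifications the paper leaves implicit.
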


\begin{proof}
Perform a depth-first search in $G$, and for each vertex $v$ at distance $i$ from the root of its tree assign the $i$th color to $v$ (Figure~\ref{fig:dfs-color}). Let the path be a longest root-to-leaf path in the depth-first search forest.
\end{proof}

\begin{corollary}
In any $n$-vertex graph $G$, if $P^*$ and $\chi$ denote the number of vertices in the longest path and the chromatic number of $G$ respectively, then $\chi\le P*$.
\end{corollary}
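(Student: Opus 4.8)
The plan is to derive this corollary directly from the preceding theorem, which already produces, in linear time, a coloring of $G$ together with a path in $G$ such that each color class contains exactly one path vertex. The only things that need to be checked are that the coloring produced there is a \emph{proper} coloring (so that the number of colors it uses is an upper bound on $\chi$) and that the path it produces has as many vertices as there are color classes (so that its length is a lower bound on $P^*$).

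First I would recall why the level-coloring of a DFS forest is proper: in an undirected depth-first search every non-tree edge joins an ancestor to one of its descendants, and an ancestor and a descendant lie at different distances from their common root, hence receive different colors; tree edges obviously join consecutive levels. So if the DFS forest has height $h$ (measured in vertices along the longest root-to-leaf path), the coloring uses exactly $h$ colors and is proper, giving $\chi \le h$. Second, the path exhibited in the theorem is a longest root-to-leaf path of the forest, which by definition has $h$ vertices, one at each level; since $G$ contains this path, $P^* \ge h$. Chaining the two inequalities yields $\chi \le h \le P^*$.

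I do not expect a real obstacle here: the content is entirely in the earlier theorem, and the corollary is just the observation that ``one path vertex per color class'' forces the number of colors and the number of path vertices to coincide, so that a proper coloring with few colors and a long path cannot both fail to exist. If one wanted the statement to be quantitatively sharp one could remark that equality $\chi = P^*$ need not hold (e.g.\ for a triangle the bound gives $\chi \le 3$ while $\chi = 3 = P^*$, but for longer odd cycles the gap appears), but for the inequality as stated no such discussion is needed.
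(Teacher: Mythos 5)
Your proof is correct and follows the same route as the paper: the corollary is read off from the preceding theorem's DFS construction, using that level-coloring of a DFS forest is proper (non-tree edges go between ancestors and descendants, which lie at different depths) and that the longest root-to-leaf tree path has one vertex per color class, giving $\chi\le h\le P^*$. The paper adds only an alternative argument via greedy minimum-degree coloring, which you do not need.
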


The tradeoff is tight: for a disjoint union of cliques, all of which have at most $\chi$ vertices and one of which has exactly $\chi$ vertices, $\chi=P^*$. An alternative proof for the corollary uses a greedy coloring algorithm in which the minimum-degree vertex is removed, the rest of the graph colored recursively, and the removed vertex colored with the minimum available color: if this uses $k$~colors, then some subgraph has minimum degree $k-1$ and a greedy algorithm can find a path of $k$ or more vertices in that subgraph.

\begin{theorem}
In linear time, given an $n$-vertex graph $G$ and an input parameter $0<\epsilon<1$, it is possible to find either a path that approximates the longest path to within a factor of $n^{\epsilon}$, or a coloring that approximates the chromatic number to within a factor of $n^{1-\epsilon}$.
\end{theorem}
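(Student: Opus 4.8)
The plan is to mimic the proof of the earlier $(1,2)$-TSP/independent-set approximation theorem: invoke the preceding linear-time construction as a black box and then branch on a single threshold test. First I would run that algorithm on $G$ to obtain, in time $O(n+m)$, a proper coloring of $G$ with some number of colors $c$ together with a path of $G$ on $c$ vertices. Recall why both objects have the same size $c$: the coloring gives color $i$ to every vertex at depth $i$ in the DFS forest, and the path is taken to be a longest root-to-leaf path in that forest, so it meets each of the $c$ color classes in exactly one vertex. Two elementary inequalities are all the arithmetic we need: because the coloring is proper, $c\ge\chi(G)$; and because the path lies inside $G$ and a path has at most $n$ vertices, $c\le P^*\le n$.

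Next I would compare $c$ against $n^{1-\epsilon}$. If $c\ge n^{1-\epsilon}$, output the path: since $P^*\le n$, its quality $c$ satisfies $P^*/c\le n/c\le n/n^{1-\epsilon}=n^{\epsilon}$, so it is an $n^{\epsilon}$-approximation to the longest path. Otherwise $c<n^{1-\epsilon}$, and I would output the coloring: using only the trivial bound $\chi(G)\ge 1$, its quality $c$ satisfies $c/\chi(G)\le c<n^{1-\epsilon}$, an $n^{1-\epsilon}$-approximation to the chromatic number. The threshold test and the selection of which object to return add only constant overhead on top of the linear-time construction, so the overall running time stays $O(n+m)$.

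I do not expect a genuine technical obstacle here; the only care required is bookkeeping. One should note that $n^{1-\epsilon}$ need not be an integer, so the test ``$c\ge n^{1-\epsilon}$'' is to be read as an equivalent integer comparison (for instance, testing whether $n\le c^{1/(1-\epsilon)}$, where $\epsilon$ is a fixed input parameter), which costs only constant time. The one conceptual point worth flagging is that the split uses deliberately \emph{complementary} trivial estimates on $c$ --- the upper bound $P^*\le n$ on the path side and the lower bound $\chi\ge 1$ on the coloring side --- and it is exactly this asymmetry that lets a single threshold at $n^{1-\epsilon}$ produce the claimed $n^{\epsilon}$ versus $n^{1-\epsilon}$ tradeoff; moving the threshold to $n^{\epsilon}$ instead yields the statement with the roles of $\epsilon$ and $1-\epsilon$ interchanged.
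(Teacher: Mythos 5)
Your proposal is correct and is essentially identical to the paper's proof: both compute the DFS-based coloring and $c$-vertex path from the preceding theorem, then return the path if it has at least $n^{1-\epsilon}$ vertices and the coloring otherwise. Your additional arithmetic (using $P^*\le n$ on one side and $\chi\ge 1$ on the other) is just the bookkeeping the paper leaves implicit.
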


\begin{proof}
Compute the coloring and path as described above. If the path contains at least $n^{1-\epsilon}$ vertices, return it; otherwise, return the coloring.
\end{proof}

\section{Combining height and number of leaves}

In any forest, the product of the number of leaves and the number of vertices in the longest root-to-leaf path is at least $n$; this inequality is tight for forests formed by $k$ paths of length $n/k$. By combining this inequality with the methods from the previous two sections, we obtain additional paired approximation results. For undirected graphs the results obtained in this fashion are relatively weak, but we may also apply the same technique to directed graphs obtaining more interesting results.

\begin{theorem}
In linear time, given an $n$-vertex directed graph $G$, it is possible to find a path $P$ in the graph, and a subset $A$ of vertices in the graph such that the induced subgraph $G[A]$ is acyclic, such that $|P|\cdot|A|\ge n$.
\end{theorem}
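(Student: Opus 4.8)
The plan is to carry the depth-first-search approach of the previous two sections over to the directed setting: a single DFS of $G$ will simultaneously produce the path $P$ (from the height of the DFS forest, as in the coloring/longest-path result) and the acyclic set $A$ (from the leaves of the DFS forest, as in the independent-set/TSP result), and the product bound will then follow from the elementary forest inequality that the number of leaves times the number of vertices on a longest root-to-leaf path is at least $n$.

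Concretely, I would first run a depth-first search of $G$, obtaining a DFS forest $F$; let $h$ be the number of vertices on a longest root-to-leaf path of $F$ and let $\ell$ be the number of leaves of $F$. For $P$, take a longest root-to-leaf path of $F$: since every tree edge of $F$ is an edge of $G$ oriented from parent to child, this is a directed path of $G$ with $|P| = h$ vertices. For $A$, take the set of leaves of $F$. Every vertex of $F$ lies on some root-to-leaf path, and there are $\ell$ of those, each with at most $h$ vertices, so $n \le \ell h = |A|\cdot|P|$, as required. Running the DFS, identifying the leaves, and recovering a longest root-to-leaf path via parent pointers are all $O(n+m)$, so the construction is linear.

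The one step that genuinely uses the directed structure — and the step I expect to be the crux — is showing that the leaves of $F$ induce an acyclic subgraph of $G$, the directed analogue of the fact (used earlier) that the leaves of an undirected DFS forest are independent. Here I would appeal to the edge classification for a depth-first search of a digraph. Suppose $(u,v)$ is an edge of $G$ with $u$ and $v$ both leaves of $F$. When the search scans this edge out of $u$, the vertex $v$ is already discovered, since otherwise $v$ would become a child of $u$, contradicting that $u$ is a leaf; moreover $v$ is not a descendant of $u$ (leaves have no children) and not an ancestor of $u$ (leaves have no descendants), so $(u,v)$ must be a cross edge, and a cross edge $(u,v)$ has $v$ finished — hence also discovered — strictly before $u$ is discovered. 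Therefore every edge inside the leaf set points from a later-discovered vertex to an earlier-discovered one, so listing the leaves in increasing order of discovery time is a topological order of $G[A]$; in particular $G[A]$ has no directed cycle. It remains only to sanity-check the degenerate cases ($G$ edgeless gives $h=1$ and $|A|=n$; a single long directed path gives $h=n$ and $|A|=1$), which merely confirm the bound and its tightness. The cross-edge argument is the only real content; the rest is DFS bookkeeping together with the inequality $\ell h \ge n$.
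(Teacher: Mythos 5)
Your proposal is correct and is exactly the paper's argument: a single DFS, with $P$ a longest root-to-leaf path, $A$ the set of leaves, and the forest inequality $\ell h\ge n$; you in fact supply the cross-edge justification that the leaves induce an acyclic subgraph, which the paper leaves implicit. One trivial quibble: since every leaf-to-leaf edge points from a later-discovered vertex to an earlier-discovered one, the topological order of $G[A]$ is \emph{decreasing} discovery time, not increasing --- this reversal does not affect the acyclicity conclusion.
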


\begin{proof}
Perform a depth-first search in $G$, let $P$ be the longest root-to-leaf path in the depth-first search forest, and let $A$ be the set of leaves of the depth-first search forest.
\end{proof}

\begin{corollary}
In any $n$-vertex directed graph $G$, if $P^*$ is the vertex set of a longest directed path and $A^*$ is a largest vertex set such that $G[A^*]$ is acyclic, then $|P^*|\cdot|A^*|\ge n$.
\end{corollary}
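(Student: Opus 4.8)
The plan is to derive this directly from the preceding theorem, which is already the constructive form of the very inequality we want. That theorem produces, from a single depth-first search of $G$, a directed path $P$ together with a vertex set $A$ such that $G[A]$ is acyclic and $|P|\cdot|A|\ge n$: here $P$ is a longest root-to-leaf path of the DFS forest, which is a genuine directed path in $G$ because a DFS on a directed graph traverses only out-edges, and $A$ is the set of leaves of the DFS forest, which induces an acyclic subgraph.

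Given the theorem, the corollary follows by an optimality/maximality domination argument. Since $P^*$ is the vertex set of a longest directed path in $G$, and $P$ is the vertex set of some directed path in $G$, we have $|P^*|\ge|P|$. Since $A^*$ is a largest vertex set whose induced subgraph is acyclic, and $A$ is one such set, we have $|A^*|\ge|A|$. All four quantities are positive, so multiplying the two inequalities yields $|P^*|\cdot|A^*|\ge|P|\cdot|A|\ge n$, which is exactly the claimed bound.

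I do not expect a real obstacle: the substantive work has been done in the theorem, and the corollary merely records that the optima are at least as large as what the algorithm outputs. The only points deserving a moment's care are the two membership claims needed to let $P$ and $A$ compete against $P^*$ and $A^*$ respectively --- that the DFS root-to-leaf path is indeed a directed path in $G$, and that the DFS leaves indeed induce an acyclic subgraph --- but both are already asserted by the theorem. It is worth flagging the contrast with the undirected case: there the same tree-product inequality gives only a weak independent-set-versus-longest-path tradeoff, whereas for directed graphs the leaves of the DFS forest are the right combinatorial object, playing the role that an independent set of leaves plays in the undirected setting of Section~2.
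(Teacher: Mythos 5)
Your proposal is correct and matches the paper's (implicit) argument exactly: the corollary follows from the preceding theorem because the optimal path and the optimal acyclic induced subgraph are each at least as large as the DFS-produced ones, so their product is at least $n$. No further commentary is needed.
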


Again, a graph in the form of a disjoint union of complete graphs shows that this tradeoff is tight.

\begin{theorem}
In linear time, given an $n$-vertex directed graph $G$ and an input parameter $0<\epsilon<1$, it is possible to find either a path that approximates the longest path to within a factor of $n^{\epsilon}$, or a set $A$ that induces an acyclic subgraph and approximates the largest vertex set of an acyclic induced subgraph to within a factor of $n^{1-\epsilon}$.
\end{theorem}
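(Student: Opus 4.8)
The plan is to mimic exactly the thresholding arguments used for the two earlier pairs (independent set / tour and coloring / longest path), but now driven by the multiplicative tradeoff of the preceding theorem rather than an additive one. First I would invoke the linear-time construction from the previous theorem: a single depth-first search yields a root-to-leaf path $P$ and the leaf set $A$ with $G[A]$ acyclic and $|P|\cdot|A|\ge n$. The whole algorithm is then just this DFS followed by one comparison, so the $O(n+m)$ running time is immediate.

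Next I would do the case analysis on the size of $P$. If $|P|\ge n^{1-\epsilon}$, output $P$. Since any path in an $n$-vertex graph has at most $n$ vertices, the optimum longest path $P^*$ satisfies $|P^*|/|P|\le n/n^{1-\epsilon}=n^{\epsilon}$, giving the claimed $n^{\epsilon}$-approximation for the longest path problem. Otherwise $|P|<n^{1-\epsilon}$, and the inequality $|P|\cdot|A|\ge n$ forces $|A|>n/n^{1-\epsilon}=n^{\epsilon}$. In this branch I would output $A$; because a largest acyclic induced vertex set $A^*$ also has at most $n$ vertices, $|A^*|/|A|< n/n^{\epsilon}=n^{1-\epsilon}$, which is the claimed approximation ratio for the acyclic-induced-subgraph problem. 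In both cases exactly one of the two guarantees is produced, which is all the paired approximation statement requires.

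I do not expect any genuine obstacle here: the theorem is essentially a corollary of the constructive tradeoff $|P|\cdot|A|\ge n$, and the only care needed is bookkeeping with the exponents (and, if one wants to be pedantic, ensuring the comparison threshold $n^{1-\epsilon}$ is handled consistently with integer sizes, which only helps). The conceptual work — that the DFS leaves induce an acyclic subgraph in a digraph, and that height times number of leaves is at least $n$ — has already been discharged in the earlier theorem and its surrounding discussion, so this final statement is purely a packaging step.
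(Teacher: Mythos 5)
Your proposal is correct and matches the paper's proof exactly: the paper also computes $P$ and $A$ via the preceding theorem's DFS construction, returns $P$ if it has at least $n^{1-\epsilon}$ vertices, and otherwise returns $A$, with the approximation ratios following from $|P|\cdot|A|\ge n$ and the trivial bound of $n$ on either optimum. Your write-up simply spells out the exponent bookkeeping that the paper leaves implicit.
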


\begin{proof}
Compute $P$ and $A$ as above. If the path contains at least $n^{1-\epsilon}$ vertices, return it; otherwise, return the acyclic subset.
\end{proof}

Unlike our results on undirected longest path problems, this result combines two problems that can both be proved to be individually hard to approximate better than any $O(n^{1-\epsilon})$ factor.  For directed longest paths, this is the result of~\cite{BjoHusKha-ICALP-04}; we are not aware of past work on approximability of acyclic induced subgraphs (although they are complementary to the more well-studied feedback vertex sets) but there is an easy approximation-preserving reduction from undirected maximum independent set to acyclic induced subgraph: replace every edge of the given undirected graph by a cycle of two directed edges.

A similar approximation result also applies when combining the asymmetric $(1,2)$-TSP and maximum acyclic induced subgraph problems: if an asymmetric distance function is defined by a directed graph $G$, with distance 1 for graph edges and distance two for non-edges, then the leaves of a depth-first search forest for $G$ form a set $A$ that induces an acyclic subgraph, and a preorder traversal of the forest forms a tour with length $n+|A|$. If $|A|\ge\epsilon n$, the result is a $1/\epsilon$-approximation to the maximum acyclic induced subgraph, and otherwise the result is a $(1+\epsilon)$-approximation to the TSP.

\section{The Hadwiger conjecture}

Although the Hadwiger conjecture remains unproven, a paired approximation resembling but weaker than the one that would be implied by (a polynomial-time algorithm for) the conjecture can be obtained by combining known algorithms.

\begin{theorem}
In any graph $G$, one can find in polynomial time a coloring with $k$ colors and a clique minor of $G$ with $\Omega(k/\sqrt{\log k})$ vertices.
\end{theorem}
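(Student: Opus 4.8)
The plan is to bolt together two standard algorithmic facts: a degeneracy-based greedy coloring, and the effective form of Kostochka's theorem that dense graphs contain large clique minors.

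First I would color $G$ with the same greedy procedure used in the alternative proof of the longest-path corollary: repeatedly delete a vertex of minimum degree, color the remaining graph recursively, and give the deleted vertex the least color not used on its neighbours. Let $G = G_0 \supseteq G_1 \supseteq \cdots$ be the sequence of graphs produced by these deletions, with $v_i$ the vertex deleted at step $i$, and suppose the coloring uses $k$ colors. Some vertex receives color $k$; it is $v_i$ for some $i$, and since it received color $k$ it must have had neighbours of all of the colors $1,\dots,k-1$ among the previously coloured vertices, i.e.\ at least $k-1$ neighbours in $G_{i+1}$, hence $k-1$ neighbours in $G_i$. Because $v_i$ was chosen as a minimum-degree vertex of $G_i$, every vertex of the induced subgraph $H := G_i$ has degree at least $k-1$ in $H$, so $H$ has minimum, and in particular average, degree at least $k-1$. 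This step runs in linear time.

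Next I would feed $H$ to the algorithmic version of Kostochka's bound~\cite{Kos-Comb-84,AloLinWah-TCS-07}: a graph of average degree $d$ can be shown, in polynomial time, to contain explicit branch sets of a clique minor on $\Omega(d/\sqrt{\log d})$ vertices. Applied to $H$, whose average degree is at least $k-1$, this produces a clique minor on $\Omega\bigl((k-1)/\sqrt{\log(k-1)}\bigr) = \Omega(k/\sqrt{\log k})$ vertices; since $H$ is a subgraph of $G$, the same branch sets realize a clique minor of $G$ of that size. Returning this minor together with the $k$-coloring from the first step completes the construction. (When $k$ is a small constant the asymptotic notation absorbs the degenerate cases; the statement is only of interest as $k$ grows.)

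The only real content lies in the second step, so the main obstacle is ensuring that Kostochka's extremal argument is genuinely constructive — that one can exhibit the branch sets of the clique minor in polynomial time rather than merely prove they exist — and tracking constants carefully enough that the logarithmic loss is $\sqrt{\log k}$ rather than $\sqrt{\log n}$. The first point is exactly what is invoked in~\cite{AloLinWah-TCS-07}; the second is automatic, since the parameter governing Kostochka's bound is the average degree of the small, dense subgraph $H$, which is $\Theta(k)$ and independent of $n$. Everything else is routine.
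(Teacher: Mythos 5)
Your proposal is correct and follows essentially the same route as the paper: the same minimum-degree greedy coloring, the same observation that $k$ colors force a subgraph of minimum degree at least $k-1$ encountered during the deletion sequence, and the same appeal to the algorithmic form of Kostochka's theorem to extract a clique minor of size $\Omega(k/\sqrt{\log k})$ from that dense subgraph. Your write-up just makes the identification of the dense subgraph (via the vertex receiving color $k$) slightly more explicit than the paper's phrasing.
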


\begin{proof}
Apply a greedy coloring algorithm to $G$, by removing the vertex with minimum degree, recursively coloring the remaining graph, restoring the removed vertex, and giving it the lowest-numbered color that is different from all its neighbors' colors. Let $d$ be the largest minimum degree of any subgraph $S$ formed during the removal process of the greedy algorithm; then by a known result of Kostochka~\cite{Kos-Comb-84}, a clique minor with $\Omega(d/\sqrt{\log d})$ vertices may be found in $S$ in polynomial time. The number $k$ of colors used by the coloring is at most $d+1$, so the result follows.
\end{proof}

\begin{corollary}
In polynomial time, given an $n$-vertex directed graph $G$ and an input parameter $0<\epsilon<1$, it is possible to find either a coloring that approximates the chromatic number to within a factor of $O(n^{\epsilon}\log^{1/4} n)$, or a clique minor that approximates the Hadwiger number to within a factor of $O(n^{1-\epsilon}\log^{1/4} n)$.
\end{corollary}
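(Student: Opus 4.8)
The plan is to feed the preceding theorem into a single threshold test. Running that theorem on $G$ produces, in polynomial time, a proper coloring using some number $k$ of colors together with a clique minor on $t$ vertices with $t = \Omega(k/\sqrt{\log k})$. Since every proper coloring of an $n$-vertex graph uses at most $n$ colors we have $k\le n$, hence $\sqrt{\log k}\le\sqrt{\log n}$, and so $t\ge c\,k/\sqrt{\log n}$ for an absolute constant $c>0$. I would record this cruder but cleaner inequality and work with it throughout, since it is precisely the step that converts the theorem's $\sqrt{\log k}$ into the $\log^{1/4}n$ factors appearing in the two target ratios.

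First I would invoke the preceding theorem to obtain $k$ and the bound $t\ge c\,k/\sqrt{\log n}$ (valid once $k\ge 2$; if $k=1$ then $G$ is edgeless, the coloring is already optimal, and there is nothing to prove). Then I would compare $k$ with the threshold $n^{\epsilon}\log^{1/4}n$. If $k\le n^{\epsilon}\log^{1/4}n$, return the coloring: since $\chi(G)\ge 1$ its approximation ratio is $k/\chi(G)\le k\le n^{\epsilon}\log^{1/4}n$. Otherwise $k>n^{\epsilon}\log^{1/4}n$, so $t\ge c\,k/\sqrt{\log n}>c\,n^{\epsilon}\log^{1/4}n/\log^{1/2}n = c\,n^{\epsilon}/\log^{1/4}n$; in this case return the clique minor, whose approximation ratio is at most $n/t<(1/c)\,n^{1-\epsilon}\log^{1/4}n=O(n^{1-\epsilon}\log^{1/4}n)$, using that the Hadwiger number of an $n$-vertex graph is at most $n$. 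Each step runs in polynomial time, so the whole procedure does.

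I do not expect any genuinely difficult step: the substance is entirely contained in the preceding theorem (itself built on Kostochka's clique-minor bound), and what remains is the choice of threshold together with bookkeeping of logarithmic exponents --- which is exactly why I would pin down the intermediate inequality $t\ge c\,k/\sqrt{\log n}$ rather than drag $\sqrt{\log k}$ through the estimates. The one point needing a little care is the degenerate regime where $\epsilon$ is close to $1$ and $n^{\epsilon}\log^{1/4}n<1$, so that the coloring branch is never taken; there the target ratio $n^{1-\epsilon}\log^{1/4}n$ already exceeds $n$, so a single-vertex clique minor meets it trivially, and I would dispose of this case in one sentence. Finally, although the introduction restricts to $\epsilon<\frac12$ (the range in which the clique-minor guarantee beats the known $O(\sqrt n)$ approximation), the argument itself uses no such restriction, so the statement holds as written for all $0<\epsilon<1$.
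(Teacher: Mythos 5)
Your argument is correct and is exactly the thresholding scheme the paper uses (implicitly here, and explicitly in its analogous path/coloring and path/acyclic-subgraph corollaries): compare the number of colors $k$ against $n^{\epsilon}\log^{1/4}n$, bound the coloring ratio by $k$ when $k$ is below the threshold, and otherwise use $t\ge ck/\sqrt{\log n}$ together with the trivial upper bound $n$ on the Hadwiger number. The paper states this corollary without proof, and your write-up supplies precisely the intended bookkeeping, including the correct observation that the restriction $\epsilon<\frac12$ mentioned in the introduction is about when the bound beats the known single-problem approximation, not about the validity of the argument.
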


\section{Set cover and hitting set}

To prove approximation hardness for the paired problem of set cover and hitting set, it will be convenient to think of both problems as being defined by a bipartite graph $G=(U,V,E)$ where $U$ has one vertex per set in the input set family, $V$ has one vertex per element in the input set family, and $E$ specifies the containment relation between sets and vertices. In this formulation, a set cover is a subset of $U$ that contains at least one member adjacent to every vertex in $V$, while a hitting set is a subset of $V$ that contains at least one member adjacent to every vertex in $U$. We define the \emph{transpose} $G^T=(V,U,E)$; a set cover in $G$ is a hitting set in the transpose and vice versa.

\begin{figure}[t]
\centering\includegraphics[width=4in]{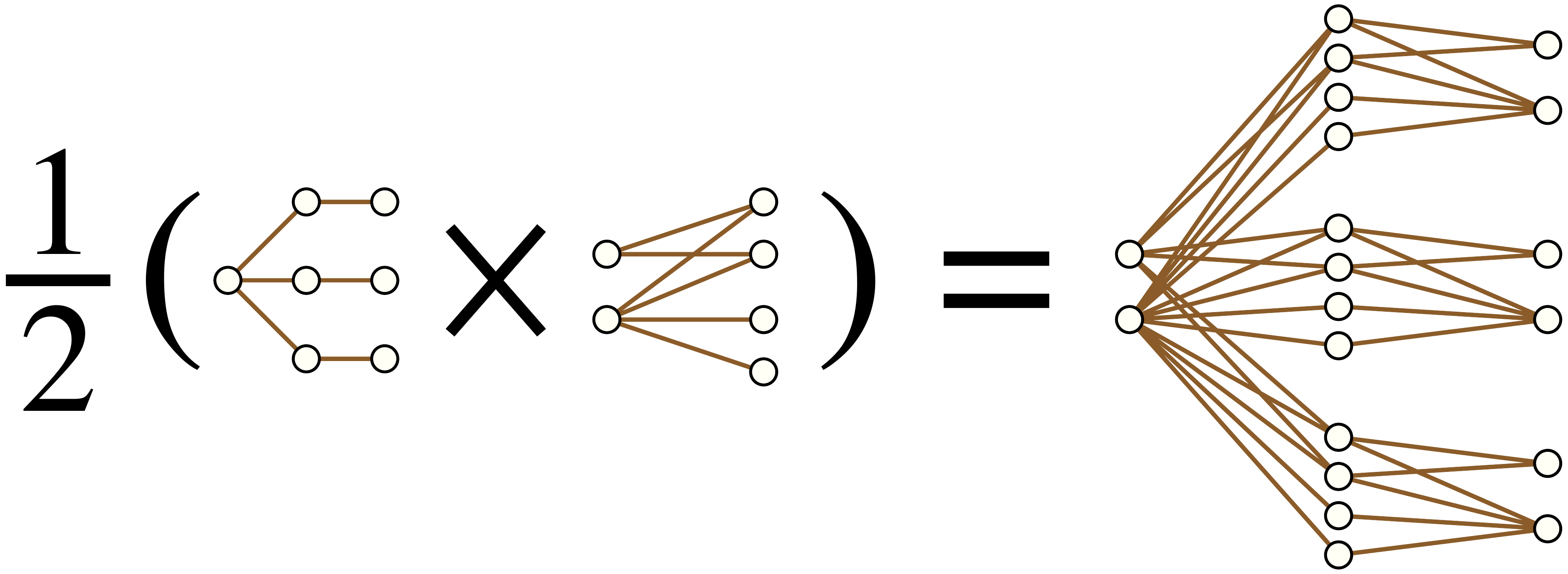}
\caption{The construction of $kG$ (here $k=3$) as one component of a tensor product of the tree $S_k$ and $G$.}
\label{fig:tensor}
\end{figure}

Given a bipartite graph $G$ and an integer $k$, we define a new bipartite graph $kG$ as follows.
$kG$ contains $k+1$ vertices $(u_i,j)$ ($0\le j\le k$) for each vertex $u_i\in U$, and $k$ vertices $(v_i,j)$ ($1\le j\le k$) for each vertex $v_i$ in $V$. If $G$ contains an edge from $u_i$ to $v_{i'}$, then $kG$ contains edges from $(u_i,0)$ to $(v_{i'},j)$ and from $(u_i,j)$ to $(v_{i'},j)$ for every $j$ in the range $1\le j\le k$.

$kG$ may alternatively be constructed using tensor products of graphs. The tensor product of any two bipartite graphs is disconnected, with two connected components. $kG$ is one of the components of the tensor product $S_k\times G$, where $S_k$ is the tree formed by subdividing each edge of a complete bipartite graph $K_{1,k}$ (Figure~\ref{fig:tensor}).

Let $\mathop{\mathrm{Cover}}(G)$ denote the size of the optimal set cover of $G$, and $\mathop{\mathrm{Hit}}(G)=\mathop{\mathrm{Cover}}(G^T)$ denote the size of the optimal hitting set of $G$.

\begin{lemma}
For any integer $k>0$ and bipartite graph $G$,
$\mathop{\mathrm{Cover}}(kG)=\mathop{\mathrm{Cover}}(G)$ and
$\mathop{\mathrm{Hit}}(kG)=k\cdot\mathop{\mathrm{Hit}}(G)$.
\end{lemma}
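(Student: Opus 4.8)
The plan is to establish each equality by a pair of matching inequalities, constructing a solution for one graph from a solution for the other. First I would fix a picture of $kG$: for each fixed $j$ with $1\le j\le k$, the vertices $(u_i,j)$ and $(v_{i'},j)$ together with their incident edges form a disjoint copy of $G$, so $kG$ consists of $k$ such copies plus one extra layer of ``set'' vertices $(u_i,0)$, where $(u_i,0)$ is adjacent to $(v_{i'},j)$ in \emph{every} layer $j$ precisely when $u_iv_{i'}\in E$. Thus the layer-$0$ vertices behave like universal sets --- a single one covers, in all layers, whatever its original covers in $G$ --- whereas for $j\ge1$ the vertex $(u_i,j)$ is adjacent only to element vertices of layer $j$.

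For $\mathop{\mathrm{Cover}}(kG)=\mathop{\mathrm{Cover}}(G)$: given an optimal set cover $C\subseteq U$ of $G$, I would check that $\{(u_i,0):u_i\in C\}$ covers $kG$, since every element vertex $(v_{i'},j)$ is dominated in $G$ by some $u_i\in C$ and hence in $kG$ by $(u_i,0)$; this yields $\mathop{\mathrm{Cover}}(kG)\le\mathop{\mathrm{Cover}}(G)$. Conversely, from any set cover $C'$ of $kG$ I would project to $C=\{u_i:(u_i,j)\in C'\text{ for some }j\}$; then $|C|\le|C'|$, and $C$ covers $G$ because each $v_{i'}\in V$, through the vertex $(v_{i'},1)$ of $kG$, is dominated by some $(u_i,j)\in C'$, and every neighbour of $(v_{i'},1)$ in $kG$ has first coordinate $u_i$ with $u_iv_{i'}\in E$. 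Hence $\mathop{\mathrm{Cover}}(G)\le\mathop{\mathrm{Cover}}(kG)$.

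For $\mathop{\mathrm{Hit}}(kG)=k\cdot\mathop{\mathrm{Hit}}(G)$: for the upper bound I would take an optimal hitting set $T\subseteq V$ of $G$ and form $H=\{(v_i,j):v_i\in T,\ 1\le j\le k\}$, of size $k|T|$, and verify it hits every $(u_i,j)$ --- for $j\ge1$ inside the layer-$j$ copy of $G$, and for $j=0$ using any single layer together with the fact that $(u_i,0)$ reaches all layers. For the lower bound, given any hitting set $H$ of $kG$, I would set $H_j=\{v_i:(v_i,j)\in H\}$ for $1\le j\le k$; since for $j\ge1$ the only neighbours of $(u_i,j)$ lie in layer $j$, each $H_j$ must itself be a hitting set of $G$, so $|H_j|\ge\mathop{\mathrm{Hit}}(G)$. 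As the layers partition $H$, this gives $|H|=\sum_{j=1}^{k}|H_j|\ge k\cdot\mathop{\mathrm{Hit}}(G)$, completing the argument.

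The only step I would call a genuine obstacle is correctly bookkeeping the adjacency structure of $kG$, and in particular the asymmetry between layer $0$ (which reaches element vertices in all layers) and layers $1,\dots,k$ (each a self-contained copy of $G$). This asymmetry is exactly what lets a set cover exploit a single layer while forcing a hitting set to pay separately --- and therefore $k$ times --- for the $k$ copies; the rest is routine lifting and projection.
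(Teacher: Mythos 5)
Your proof is correct and follows essentially the same approach as the paper: exploit that a layer-$0$ set vertex dominates its original's elements in every layer (so the cover number does not grow), while each of the $k$ element layers must separately hit its own layer's set vertices (so the hitting number multiplies by $k$). Your version just spells out the two-sided inequalities slightly more explicitly than the paper does.
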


\begin{proof}
In any covering set for $kG$, we may assume without loss of generality that all vertices of the covering set have the form $(u_i,0)$, for any $(u_i,j)$ with $j\ne 0$ covers a strict subset of the vertices covered by $(u_i,0)$. Thus, it forms a covering set in $G$, and any covering set in $G$ can be transformed into a covering set in $kG$ in the same way. For the hitting set problem, hitting each subgraph $(U,j)$ with $j\ne 0$ requires forming a hitting set in each $(V,j)$, and if such a set is included in the hitting set then $(U,0)$ will also be hit. Therefore, an optimal hitting set in $kG$ consists of $k$ optimal hitting sets in $G$, one in each subset $(V,j)$.
\end{proof}

Our reduction will use a graph $k^*G$ formed as the disjoint union of $kG^T$ and $(kG^T)^T$.

\begin{lemma}
\label{lem:starsize}
For any integer $k>0$ and bipartite graph $G$,
$\mathop{\mathrm{Cover}}(k^*G)=\mathop{\mathrm{Hit}}(k^*G)=k\cdot\mathop{\mathrm{Cover}}(G)+\mathop{\mathrm{Hit}}(G)$.
\end{lemma}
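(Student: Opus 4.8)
The plan is to reduce the statement to the previous lemma together with two elementary observations: that $\mathop{\mathrm{Cover}}$ and $\mathop{\mathrm{Hit}}$ are additive over disjoint unions of bipartite graphs, and that the construction $k^*G$ is invariant, up to isomorphism and reordering of its components, under transposition, so that its optimal cover and optimal hitting set have the same size.

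First I would record the additivity: if $H$ is the disjoint union of $H_1$ and $H_2$, then any set cover of $H$ restricts to a set cover of each $H_i$ and, conversely, a union of covers of the $H_i$ is a cover of $H$, so $\mathop{\mathrm{Cover}}(H)=\mathop{\mathrm{Cover}}(H_1)+\mathop{\mathrm{Cover}}(H_2)$; the identical argument gives $\mathop{\mathrm{Hit}}(H)=\mathop{\mathrm{Hit}}(H_1)+\mathop{\mathrm{Hit}}(H_2)$. Applying this to $k^*G = kG^T \sqcup (kG^T)^T$ yields
\[
\mathop{\mathrm{Cover}}(k^*G)=\mathop{\mathrm{Cover}}(kG^T)+\mathop{\mathrm{Cover}}\bigl((kG^T)^T\bigr).
\]
Now I would invoke the previous lemma with $G$ replaced by $G^T$: it gives $\mathop{\mathrm{Cover}}(kG^T)=\mathop{\mathrm{Cover}}(G^T)=\mathop{\mathrm{Hit}}(G)$ and $\mathop{\mathrm{Hit}}(kG^T)=k\cdot\mathop{\mathrm{Hit}}(G^T)=k\cdot\mathop{\mathrm{Cover}}(G)$. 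Since $\mathop{\mathrm{Cover}}\bigl((kG^T)^T\bigr)=\mathop{\mathrm{Hit}}(kG^T)$ by the definition of the transpose, substituting gives $\mathop{\mathrm{Cover}}(k^*G)=k\cdot\mathop{\mathrm{Cover}}(G)+\mathop{\mathrm{Hit}}(G)$, the claimed value.

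For the equality $\mathop{\mathrm{Hit}}(k^*G)=\mathop{\mathrm{Cover}}(k^*G)$, I would observe that $(k^*G)^T$ is the disjoint union of $(kG^T)^T$ and $kG^T$, which is the same bipartite graph as $k^*G$ with its two components listed in the opposite order; hence $\mathop{\mathrm{Hit}}(k^*G)=\mathop{\mathrm{Cover}}\bigl((k^*G)^T\bigr)=\mathop{\mathrm{Cover}}(k^*G)$. Combining the previous paragraph with this equality completes the proof. The only point requiring care is the bookkeeping of which side of each bipartite graph plays the role of ``sets'' versus ``elements'' after the nested transpositions; I do not expect any real obstacle beyond keeping those roles straight, since all of the quantitative content is already supplied by the previous lemma and by the trivial additivity over components.
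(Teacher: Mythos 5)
Your proof is correct and takes the same route as the paper, which simply states that the lemma ``follows immediately from the previous lemma''; you have filled in exactly the intended details (additivity over the two components, the previous lemma applied to $G^T$, and the self-transposality of $k^*G$), and your bookkeeping of the transpositions checks out.
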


\begin{proof}
This follows immediately from the previous lemma.
\end{proof}

Thus, at a cost of expanding the graph size by a factor of $2k$, we may amplify the set cover number relative to the hitting set number, and symmetrize the problem so that a paired approximation to the symmetrized problem gives the same information no matter whether an approximation returns a set cover or a hitting set for its given instance.
We define a \emph{reduced solution} to the set cover problem in $kG$ to be a solution in which only vertices of  the form $(u_i,0)$ appear in the cover. We define a reduced solution to the set cover or hitting set problem in $k^*G$ similarly: whenever it is possible for the solution to use a vertex $(u_i,0)$ in place of a vertex of the form $(u,i,j)$ for $j\ne 0$, the vertex $(u_i,j)$ must not be included. It is straightforward to transform any solution into a reduced solution of equal or smaller size in polynomial time.

\begin{theorem}
For all $\epsilon>0$,
unless NP${}\subset\mathop{\mathrm{DTIME}}[n^{O(\log\log n)}]$, it is not possible to solve the paired approximation problem of set cover and hitting set to within an approximation ratio of $(1-\epsilon)\log n$, where $n$ denotes the total size of the input set cover instance.
\end{theorem}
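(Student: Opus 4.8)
The plan is to turn the symmetrized construction $k^*G$ into an approximation‑preserving reduction \emph{from} ordinary set cover \emph{to} the paired problem: a paired approximation on $k^*G$ will be forced, by Lemma~\ref{lem:starsize}, to solve one of the two problems well, and the symmetrization guarantees that whichever one it solves we can read off a good set cover of the original instance.

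First I would normalize the base instance. Given a set cover instance $G=(U,V,E)$ of total size $n_G$ drawn from Feige's hard family, add one new element adjacent to every vertex of $U$. This does not change the optimal cover (any nonempty cover already covers the new element), it forces $\mathop{\mathrm{Hit}}(G)=1$, and it increases the total size by at most a factor of two, so Feige's inapproximability of set cover still applies to the new instance with only a negligible change in the constant; renaming, we may therefore assume $\mathop{\mathrm{Hit}}(G)=1$. Now fix a constant $k=k(\epsilon)$ (to be pinned down at the end) and build $k^*G$. By Lemma~\ref{lem:starsize} we have $\mathop{\mathrm{Cover}}(k^*G)=\mathop{\mathrm{Hit}}(k^*G)=k\cdot\mathop{\mathrm{Cover}}(G)+1$, while the total size of $k^*G$ is $\Theta(k\,n_G)$, so $\ln n=\ln n_G+O_\epsilon(1)=(1+o(1))\ln n_G$.

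Next I would run the hypothesized paired $((1-\epsilon)\ln n)$‑approximation on $k^*G$ and extract a cover of $G$. Because $\mathop{\mathrm{Cover}}(k^*G)=\mathop{\mathrm{Hit}}(k^*G)$, the algorithm — whether it outputs a set cover or a hitting set — returns a solution of size at most $(1-\epsilon)\ln n\,\bigl(k\cdot\mathop{\mathrm{Cover}}(G)+1\bigr)$; this equality is exactly what the symmetrization buys. Recall $k^*G=kG^T\sqcup(kG^T)^T$, and note that a set cover of the component $(kG^T)^T$ is literally a hitting set of $kG^T$, while a hitting set of $k^*G$ restricted to the component $kG^T$ is again a hitting set of $kG^T$. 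So in either case, restricting the returned solution to the appropriate component yields a hitting set of $kG^T$ of size at most $(1-\epsilon)\ln n\,\bigl(k\cdot\mathop{\mathrm{Cover}}(G)+1\bigr)$. By the same layered analysis used to prove $\mathop{\mathrm{Hit}}(kG^T)=k\cdot\mathop{\mathrm{Hit}}(G^T)=k\cdot\mathop{\mathrm{Cover}}(G)$, such a hitting set restricts, in each of its $k$ layers, to a set cover of $G$; picking the cheapest layer gives — in polynomial time — a set cover of $G$ of size at most $(1-\epsilon)\ln n\,\bigl(\mathop{\mathrm{Cover}}(G)+1/k\bigr)$.

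Finally I would derive the contradiction. The extracted set is a genuine cover of $G$, and since $\mathop{\mathrm{Cover}}(G)\ge 1$ its size is at most $(1-\epsilon)(1+1/k)\ln n=(1-\epsilon)(1+1/k)(1+o(1))\ln n_G$ times $\mathop{\mathrm{Cover}}(G)$; choosing $k$ a large enough constant (so that $(1-\epsilon)(1+1/k)<1-\tfrac{\epsilon}{2}$) and then $n_G$ large, this ratio is below $(1-\tfrac{\epsilon}{2})\ln n_G$. Thus a paired $((1-\epsilon)\ln n)$‑approximation would give a polynomial‑time $((1-\tfrac{\epsilon}{2})\ln n_G)$‑approximation for set cover, which is impossible unless NP${}\subseteq\mathop{\mathrm{DTIME}}[n^{O(\log\log n)}]$. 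I expect the only delicate point to be the normalization step: without forcing $\mathop{\mathrm{Hit}}(G)$ to be small, the $\mathop{\mathrm{Hit}}(G)/k$ slack in the extracted cover would be negligible only if $k$ grew polynomially in $n_G$, which would inflate $\ln n$ to $(2-o(1))\ln n_G$ and yield only the weaker bound $(2-\epsilon)\ln n$; the universal‑element trick (or, alternatively, an appeal to the fact that Feige's specific instances already admit a small hitting set) is what keeps $k$ constant and recovers the tight $(1-\epsilon)$ coefficient.
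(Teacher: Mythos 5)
Your proof is correct, and it departs from the paper's in one substantive way: how the additive $\mathop{\mathrm{Hit}}(G)$ term in $\mathop{\mathrm{Cover}}(k^*G)=k\cdot\mathop{\mathrm{Cover}}(G)+\mathop{\mathrm{Hit}}(G)$ is suppressed. The paper takes $k=|U|+1$, so that $\lfloor q/k\rfloor$ strips off the hitting-set term exactly, and must then argue---using internal details of Feige's construction, namely that the number of sets is much smaller than the number of elements ($\log|U|=O(\epsilon\log|V|)$)---that this polynomially large $k$ inflates $\log$ of the instance size by only a $1+O(\epsilon)$ factor. You instead pre-normalize with a universal element so that $\mathop{\mathrm{Hit}}(G)=1$, which lets $k$ be a constant depending only on $\epsilon$; the size blowup is then $O_\epsilon(1)$ and Feige's theorem is used entirely as a black box. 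This is cleaner and strictly more general: it recovers the tight $(1-\epsilon)\ln n$ constant from \emph{any} hard family of set cover instances, whereas the paper concedes that its route to the tight constant depends on details of Feige's reduction. Your extraction step (restrict the returned solution to the component where it is a hitting set of $kG^T$, decompose into $k$ layers each of which must be a cover of $G$, take the cheapest layer) is the same mechanism underlying the paper's $\lfloor q/k\rfloor$ argument, just made explicit and without needing the ``reduced solution'' machinery. The one point worth stating for completeness is that the universal element leaves the family of feasible covers unchanged and adds only $|U|\le n_G$ to the total size, so the hardness transfers with an additive $O(1)$ loss in $\ln n_G$---which you do address.
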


\begin{proof}
Feige~\cite{Fei-JACM-98} describes a reduction from an NP-hard problem to the set cover problem, such that an answer to the set cover problem that is accurate to an approximation ratio of $(1-\epsilon)\log N$ would allow one to infer a correct answer to the original problem. Here $N$ denotes the number of elements in the resulting set cover instance; if the satisfiability problem instance has size $s$, then the running time of the reduction is $O(s^{O(\log\log s)})$. From this he infers that, unless NP${}\subset\mathop{\mathrm{DTIME}}[n^{O(\log\log n)}]$, it is not possible to solve the set cover problem to within an approximation ratio of $(1-\epsilon)\log N$.

In the bipartite graph $G=(U,V,E)$ representing the set cover instance produced by Feige's reduction, $\log |U|=O(\epsilon\log|V|)$. That is, the number of sets in the set family is significantly smaller than the number of elements in their union. The elements in the set family's union consist of pairs of values that specify a ``partition problem point'' and a ``random string''; the random string specifies a sequence of positions in which to probe a probabilistically checkable proof for the satisfiability instance, but the number of partition problem points ($m$, in Feige's notation) is chosen to be a much larger number, so that the logarithm of the number of elements is $(1-\epsilon/2)\log m$. The sets in the set family, on the other hand, may be specified by triples of values that specify the identity of a prover for the probabilistically checkable proof, a sequence of positions in which to probe the proof, and the answers to be found at those positions. The number of sets is therefore roughly comparable to the number of possible random strings, which is much smaller than $m$.

To complete the proof, we transform $G$ to the graph $k^*G$, where $k=|U|+1$. An optimal solution to either the set cover instance or the hitting set instance in $k^*G$, with total size $q$, may be transformed to an optimal solution to the set cover problem for $G$ with total size $\lfloor q/k\rfloor$, because $k$ is chosen to be so large that division by $k$ reduces the size formula from Lemma~\ref{lem:starsize} to the size of the set cover problem plus a number less than one. Similarly, any reduced solution with total size $q'$ for $k^*G$ may be transformed to a solution to the set cover problem for $G$ with total size $\lfloor q'/k\rfloor$. Therefore, it is hard to approximate the paired problem for $k^*G$ to within a factor better than $(1-\epsilon)\log N= (1-O(\epsilon))\log(k^2N)=(1-O(\epsilon))\log|k^*G|$.
\end{proof}

\section{Clique and independent set}

Our proof of hardness for the paired clique and independent set problem will involve modifying a standard reduction proof of hardness for approximating the maximum clique, by adding edges to the graph resulting from this reduction so that it loses all of its large independent sets without gaining any new large cliques. The property we need of the set of edges to be added can be encapsulated in the form of a Ramsey-theoretic property: define an $n$-vertex graph $G$ to be an \emph{$f(n)$-(biclique,independent) Ramsey graph} if $G$ contains neither an $f(n)$-vertex independent set nor a complete bipartite subgraph $K_{f(n),f(n)}$. Standard techniques for random graphs show that a graph chosen uniformly among all $n$-vertex graphs (equivalently, with each edge included independently at random with probability $1/2$) is with high probability an $O(\log n)$-(biclique,independent) Ramsey graph but we are willing to accept a weaker bound in order to avoid the use of randomness.

\begin{lemma}
\label{lem:Ramsey}
For any constant $\epsilon$,
there exists a deterministic algorithm that takes a parameter $n$ and produces in polynomial time an $n$-vertex $O(n^\epsilon)$-(biclique,independent) Ramsey graph.
\end{lemma}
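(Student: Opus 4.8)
The plan is to produce the graph explicitly. The range $\epsilon\ge\frac12$ is classical: a Paley graph on $n$ vertices (take $n$ prime and congruent to $1$ modulo $4$, or pass to a nearby prime) is strongly regular with second eigenvalue $\Theta(\sqrt n)$, so the expander mixing lemma bounds both its independence number and the size of its largest balanced complete bipartite subgraph by $O(\sqrt n)$. The real content is the range $\epsilon<\frac12$, which needs a construction that beats the $\sqrt n$ barrier shared by all eigenvalue-based constructions---a deterministic ``Ramsey'' construction in the spirit of the Frankl--Wilson graphs, but one that controls the biclique number rather than the clique number.

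For general $\epsilon$ I would amplify a tiny seed. Write $\rho(F)=\max\{\omega(F),\alpha(F),\beta(F)\}$, where $\beta(F)$ is the largest $s$ with $K_{s,s}\subseteq F$. Fix a constant $h=h(\epsilon)$ so large that $\log_h(C\log h)<\epsilon$, where $C$ is the absolute constant in the biclique inequality below; by exhaustive search over all graphs on $h$ vertices---a computation whose size depends only on $\epsilon$---find a graph $H$ with $\rho(H)=O(\log h)$, which exists because a uniformly random graph on $h$ vertices has $\omega$, $\alpha$, and $\beta$ all $O(\log h)$ with high probability. Then set $G=\Phi^{(k)}(H)$ for a composition operation $\Phi$ iterated $k=\lceil\log_h n\rceil$ times, deleting vertices down to exactly $n$ at the end (deletion only decreases $\alpha$ and $\beta$). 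The properties required of $\Phi$ are that the vertex count is multiplicative, that $\alpha(\Phi(F_1,F_2))\le\alpha(F_1)\alpha(F_2)$, and---the crucial one---that $\beta(\Phi(F_1,F_2))\le C\,\rho(F_1)\,\rho(F_2)$. These give, by a one-line induction, $\rho(G)\le C^{\,k}\rho(H)^{k}=n^{\log_h(C\,\rho(H))+o(1)}=O(n^{\epsilon})$, bounding the independence number and the biclique number at once; and the construction is polynomial-time because each application of $\Phi$ at most squares the number of vertices.

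The routine steps are the first-moment estimate that the seed exists together with the brute-force search for it, the inequality $\alpha(\Phi(F_1,F_2))\le\alpha(F_1)\alpha(F_2)$ (true for essentially any product-like $\Phi$), and the final bookkeeping. The difficult step is the biclique bound for $\Phi$, and the obstruction there is sharp: the three standard products perfect for the independence number---lexicographic, strong, and tensor---are all ruinous for bicliques, since a single edge of $F_1$ already joins two entire blocks of the product completely, so $\beta$ is multiplied by a block size (equivalently, by a maximum degree) at each of the $\Theta(\log_h n)$ levels and blows up to $n^{1-o(1)}$. So $\Phi$ must glue adjacent blocks along a bipartite pattern that is itself free both of large complete bipartite subgraphs and of large edgeless bipartite subgraphs; the biclique bound would then be proved by projecting an arbitrary $K_{s,s}\subseteq\Phi(F_1,F_2)$ onto the two coordinates and showing, from the fact that the reflexive closure of each factor contains no biclique with both sides larger than $\omega+\beta$, that each projection lies inside a set of size $O(\rho(F_i))$, so that $s=O(\rho(F_1)\rho(F_2))$. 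Finding the right gluing pattern and making this projection argument go through is, I expect, the heart of the matter and the only genuinely hard part.
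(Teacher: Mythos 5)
There is a genuine gap, and you have located it yourself: the entire argument rests on a composition operation $\Phi$ satisfying $\beta(\Phi(F_1,F_2))\le C\,\rho(F_1)\,\rho(F_2)$, and no such operation is constructed. You correctly observe that the lexicographic, strong, and tensor products all fail because a single edge of $F_1$ joins two whole blocks completely, so the biclique number picks up a factor of the block size at every level of the recursion. But the fix you gesture at --- gluing adjacent blocks along ``a bipartite pattern that is itself free both of large complete bipartite subgraphs and of large edgeless bipartite subgraphs'' --- is exactly the definition of a bipartite Ramsey graph, which for subsets of size $n^{\delta}$ with $\delta<\frac12$ is itself a hard explicit-construction problem (it is the object constructed by Barak et al.\ in \cite{BarKinSha-STOC-05}). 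So the ``only genuinely hard part'' you defer is not a routine verification but the actual mathematical content of the lemma; as written, the proposal assumes the key ingredient rather than supplying it. A secondary concern is that even granting such a gluing pattern, the projection argument bounding $s=O(\rho(F_1)\rho(F_2))$ is only sketched, and your invariant $\rho$ tracks the clique number $\omega$ as well, which the lemma does not require and which adds an unverified obligation ($\omega$ is typically multiplicative, not submultiplicative, under product-like operations). The Paley-graph remark for $\epsilon\ge\frac12$ is fine but does not address the regime where the lemma has content.

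For comparison, the paper does not recurse on a product at all. It imports the bipartite Ramsey graphs of \cite{BarKinSha-STOC-05} as a black box, takes $n=2^k$ vertices labeled by $k$-bit strings, builds one bipartite Ramsey graph $G_i$ for each bit position $i$ (splitting vertices by the value of bit $i$), and joins $u$ to $v$ according to $G_i$ where $i$ is the first position at which their labels differ. Given two sets $A,B$ of size $\Omega(n^{\epsilon})$, a majority-vote argument scans the bit positions, discarding at most a $1/2k$ fraction of minority vertices at each step, and must terminate at some position $i$ with subsets $A',B'$ of size $\Omega(n^{\epsilon}/\log n)$ whose labels agree before bit $i$ and differ at bit $i$; all edges between $A'$ and $B'$ then come from the single graph $G_i$, whose bipartite Ramsey property supplies both a missing edge (no large biclique) and, by self-complementarity of the construction, a present edge (no large independent set). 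If you want to salvage your approach, the honest route is to cite the same bipartite Ramsey construction as the gluing pattern and then prove the projection bound in full --- at which point you would essentially be reproving the paper's argument in recursive form.
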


\begin{proof}
Our construction uses the \emph{bipartite Ramsey graph} construction of \cite{BarKinSha-STOC-05}: this  construction produces, for any $n$ and any constant $\delta>0$ a bipartite graph $(U,V,E)$ in which, for every two subsets of $U$ and $V$ of size $n^\delta$ there is at least one edge connecting $U$ to $V$ and at least one edge missing between $U$ and $V$, so that these two subsets induce neither a complete bipartite graph nor its complement.

We may assume without loss of generality that $n=2^{k}$ for some $k$ (for other values of $n$, one may round $n$ up to the next larger power of two, and then take an arbitrary $n$-vertex induced subgraph of the resulting larger graph). Label the vertices of the graph by distinct length-$k$ bitstrings. For each $i=0,1,\ldots k-1$, we form a bipartite Ramsey graph for $\delta=\epsilon/2$ connecting the vertices in which bit $i$ is $0$ to the vertices in which bit $i$ is 1; we form our overall graph $G$ by connecting two vertices $u$ and $v$ by an edge if they are connected by an edge in $G_i$, where $i$ is the first position at which the labels of $u$ and $v$ differ.

We now show that $G$ cannot contain a large biclique. Suppose $A$ and $B$ are two subsets of size $\Omega(n^\epsilon)$; we must show that some potential edge from $A$ to $B$ is not present. We describe below an algorithm that finds this missing edge, by searching for a number $i$, and large subsets $A'\subset A$ and $B'\subset B$, with the following properties:
\begin{enumerate}
\item The vertices in $A'\cup B'$ all have equal labels up to but not including their $i$th bits
\item The $i$th bits of the labels in $A'$ are all equal,
\item The $i$th bits of the labels in $B'$ are all equal, and
\item The $i$th bits of the labels in $A'$ differ from the $i$th bits of the labels in $B'$.
\end{enumerate}
Once these properties are met, the bipartite Ramsey graph property of $G_i$ will guarantee a missing edge.

To find $i$, test each number $0$, $1$, $2$, etc., in sequence. Prior to testing each value of $i$, we will guarantee that the remaining vertices in $A$ and $B$ all have equal labels up through bit $i-1$; after testing $i$, we will either find a large set of vertices with differing labels or we will extend this guarantee, showing that the remaining vertices all have equal labels up through bit $i$.
Specifically, if we are testing position $i$, and the majority value of the bit in position $i$ of the labels of the remaining vertices in $A$ differs from the majority value for $B$, return the vertices having these differing majority values as the large sets $A'$ and $B'$. If the majority of the labels in $A$ and in $B$ have the same value for bit $i$, but either of these sets contain a minority of at least $|A|/2k$ or $|B|/2k$ vertices with the other bit value, then return that minority as one of $A'$ or $B'$ and the majority on the other side as the other set. Finally, if the number of vertices and $A$ and $B$ having a non-majority value in position $i$ is less than  $|A|/2k$ and  $|B|/2k$ respectively, remove all those non-majority vertices from $A$ and $B$ and continue to $i+1$.
The number of removed vertices, over the course of testing all values of $i$, is at most half the starting number of vertices. Therefore, it is not possible to continue past $i=\log n$, because to do so would imply that half the vertices in the original sets $A$ and $B$ have equal labels, violating the assumption that all labels are distinct. Thus, for some $i$ we find sets $A'$ and $B'$ of sizes at least $\Omega(n^\epsilon/\log n)=\Omega(n^\delta)$ such that the edges connecting $A'$ to $B'$ are all drawn from $G_i$; the fact that there is a missing edge from $A'$ to $B'$ then follows from the bipartite Ramsey property.

The bipartite Ramsey property definition, and our construction of $G$ from bipartite Ramsey graphs, are both self-complementary. Therefore the same proof as above shows that $G$ can have no two large subsets $A$ and $B$ in which all edges in $A\times B$ are missing. A fortiori, it can also have no single large independent set, because any two halves of an independent set would have no edges from one half to the other.

We have shown that every two large subsets of $G$ neither form a complete bipartite subgraph nor an independent set, fulfilling the definition of a (biclique,independent) Ramsey graph.
\end{proof}

We use these Ramsey graphs to perturb the graphs formed by Zuckerman's~\cite{Zuc-STOC-06} inapproximability reduction for maximum clique; the perturbation will remove any large independent sets that these graphs may have, but we will also need to show that it does not introduce new large cliques. To do so, we must examine in more detail the clique reduction.
Zuckerman shows that there is an NP-complete problem, and a system of probabilistically checkable proofs and proof checkers, with the following properties. A positive problem instance of length $N$ corresponds to a proof represented as a bitstring of length $\sigma=N^{O(1)}$, while a negative problem instance has no valid proof. A proof checker is a probabilistic Turing machine that generates $R$ truly random bits (for a parameter $R=\Theta(\log n)$ and uses these bits, the input instance, and past probe results to determine a sequence of  bit positions at which to probe the proof by examining the bit at that position. At any point, it may accept or reject the proof based only on the positions it has examined. Among the positions probed by the checker, $\epsilon R$ of the positions are \emph{free bits}, positions with the property that the checker may continue without rejecting or accepting no matter what value is seen there; in the remaining proof positions, the checker expects to see a certain bit and will reject if it does not see it. A valid proof will always be accepted; an invalid proof will be rejected with probability $1-1/2^{(1-\epsilon)R}$.

The reduction from this system to finding cliques in graphs is the same one used by~\cite{FeiGolLov-JACM-96} and many subsequent papers on hardness of approximation, as follows. Given an instance of the starting NP-complete problem, construct from it a graph the vertices of which represent all possible accepting runs of a prover: that is, for each possible sequence of random bits, and for each possible sequence of results from probing free bits, simulate what the prover would do for those bits. In the simulation, whenever a non-free bit is probed, return the result that will not make the prover reject. If the simulation eventually accepts, construct a vertex in the graph. Link two vertices by an edge if their simulations returned the same bit for every probe position in the proof that they both examined.
The graph has at most $2^{(1+\epsilon)R}$ vertices (at most one for each combination of random and free bits). Any clique in this graph corresponds to a set of prover simulations that agree on any bits of the proof that they probed in common. That is, a clique may be represented as a string of $\sigma$ bits, describing a purported proof, and the vertices in the clique are the simulations that accepted after probing that string. If there is a valid proof, the simulations that probe this proof for any of the $2^R$ choices of random bits form a clique with $2^R$ vertices. But if a string of $\sigma$ bits represents an invalid proof, then most of the simulations that probe that string reject it, and its clique will have size only $2^{\epsilon R}$.

Now let $G$ be a graph formed by this reduction. Assume without loss of generality that $R$ is chosen sufficiently large that $2^{\epsilon R}\ge 2\sigma$ (if not, use modified provers that perform $O(1)$ runs of the original prover and accept only if all of these runs accept). Let $H$ be a $2^{\epsilon R}$-(biclique,independent) Ramsey graph on the same set of vertices. What can we say about the clique and independence numbers of $G\cup H$? First, clearly, $G\cup H$ has independence number at most $2^{\epsilon R}$, because any independent set in $G\cup H$ is also an independent set in $H$. But second, as we now show, if $G$ has no large cliques then neither does $G\cup H$.

\begin{lemma}
\label{lem:noclique}
With the notation as above, if $G$ comes from a negative instance to the starting NP-complete problem (so that it has no cliques of size $2^{\epsilon R}$) then $G\cup H$ has no cliques of size $2^{2\epsilon R}$.
\end{lemma}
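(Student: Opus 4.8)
The plan is to prove the contrapositive statement in a quantitatively stronger form: every clique $C$ in $G\cup H$ satisfies $|C|<2^{2\epsilon R}$. The key is to understand the graph $F$ on the vertex set $C$ whose edges are the pairs of $C$ that are \emph{not} adjacent in $G$. Two observations launch the argument. First, because $C$ is a clique in $G\cup H$, every edge of $F$ is an edge of $H$, so $F$ is a subgraph of the (biclique,independent) Ramsey graph $H$. Second, recalling the construction of $G$: two vertices are adjacent in $G$ exactly when their prover simulations agree on every proof position they both probe, so a pair $u,v$ in $C$ is a non-edge of $G$ precisely when some proof position $p$ is probed by both simulations and is seen with different bit values. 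This gives $F$ an explicit structure.

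Concretely, for each of the $\sigma$ proof positions $p$ and each bit $b\in\{0,1\}$, let $X_{p,b}$ be the set of vertices of $C$ whose simulation probes $p$ and sees $b$ there; then $X_{p,0}$ and $X_{p,1}$ are disjoint, and $F$ is exactly the union, over all $p$, of the complete bipartite graphs joining $X_{p,0}$ to $X_{p,1}$. So $F$ is a union of at most $\sigma$ bicliques, and each such biclique is a complete bipartite subgraph of $H$. Since $H$ contains no $K_{2^{\epsilon R},2^{\epsilon R}}$, for every $p$ the smaller of the two sets $X_{p,0},X_{p,1}$ has fewer than $2^{\epsilon R}$ vertices. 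Now delete from $C$ the smaller side of each of these at most $\sigma$ bicliques: this removes fewer than $\sigma\cdot 2^{\epsilon R}\le \tfrac12\,2^{2\epsilon R}$ vertices (here is where the standing assumption $2^{\epsilon R}\ge 2\sigma$ is used), and the remaining set $Z$ is independent in $F$, hence a clique of the subgraph of $G$ induced on $C$, hence a clique of $G$. Because $G$ comes from a negative instance it has no clique of size $2^{\epsilon R}$, so $|Z|\le 2^{\epsilon R}-1$. Combining, $|C|\le(\sigma+1)(2^{\epsilon R}-1)<2^{2\epsilon R}$, so $G\cup H$ has no clique of size $2^{2\epsilon R}$.

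The step that carries the argument, and the one I would be most careful about writing cleanly, is the structural claim that the non-edges of $G$ within any vertex set decompose into only $\sigma$ bicliques coming from the $\sigma$ proof positions; this, together with the fact that removing one side of a biclique destroys it, is exactly what converts the single Ramsey parameter $2^{\epsilon R}$ into a clique bound that is quadratically larger, and it is why the free-bit reduction was padded so that the proof length $\sigma$ is at most $\tfrac12\,2^{\epsilon R}$. The remaining ingredients — that $F\subseteq E(H)$, that the two sides $X_{p,0},X_{p,1}$ are disjoint, and the final arithmetic — are routine.
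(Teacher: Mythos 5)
Your proof is correct, and it takes a genuinely different (and arguably cleaner) route than the paper's. Both arguments rest on the same two facts --- that the non-edges of $G$ between vertices of a prospective clique are organized into bicliques indexed by the $\sigma$ proof positions, and that $H$ contains no $K_{2^{\epsilon R},2^{\epsilon R}}$ --- but they deploy them differently. The paper argues by exhibiting a single missing edge: it forms the majority-vote proof string $P$, notes that if most of $S$ agreed with $P$ everywhere they would form a forbidden large clique of $G$, and otherwise uses the pigeonhole principle (this is where $2\sigma\le 2^{\epsilon R}$ enters) to find one position $i$ with a minority set $A$ of size $2^{\epsilon R}$ disagreeing with the majority set $B$ there; since $A\times B$ contains no $G$-edges and $H$ cannot supply all of them, some pair of $S$ is non-adjacent in $G\cup H$. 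You instead decompose \emph{all} of the $G$-non-edges inside $C$ into the $\sigma$ bicliques $X_{p,0}\times X_{p,1}$ at once, use the Ramsey property to shrink one side of each below $2^{\epsilon R}$, delete those small sides, and observe that what remains is a clique of $G$ and hence small. Your version dispenses with the majority string and the pigeonhole step, makes the role of the padding assumption $2\sigma\le 2^{\epsilon R}$ completely explicit, and yields the slightly sharper quantitative conclusion $\omega(G\cup H)\le(\sigma+1)(2^{\epsilon R}-1)$ rather than merely ruling out cliques of size exactly $2^{2\epsilon R}$; the paper's version is shorter and produces the witness non-edge directly. Both proofs use only the biclique half of the Ramsey property of $H$ (the independent-set half is needed elsewhere in the reduction), and your structural claim --- that a pair of accepting runs is non-adjacent in $G$ exactly when some commonly probed position is read with different values, so that the complement of $G$ restricted to $C$ is exactly $\bigcup_p (X_{p,0}\times X_{p,1})$ --- is a faithful reading of the FGLSS construction as described in the paper.
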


\begin{proof}
Let $S$ be a set of vertices in $G$ (that is, simulations of proof checkers) of cardinality $2^{2\epsilon R}$; we must show that $S$ is not a clique.
Let $P$ be a bitstring of length $\sigma$, where the value of the bit at each position of $P$ is the value seen at that position by a majority of the simulations in $S$ that examined that position. It cannot be the case that the majority of simulations in $S$ agree with $P$ in each of their probes, for otherwise this majority would form a clique of size $2^{2\epsilon R-1}$ in $G$, contradicting the assumption that there is no such clique. Therefore,  some subset of $2^{2\epsilon R-1}$ vertices in $S$ disagree with $P$ in at least one position. By the pigeonhole principle and the assumption that $2^{\epsilon R}\ge 2\sigma$ there exist a position $i$ in $P$ and a subset $A$ of $2^{\epsilon R}$ vertices in $S$ that disagree with $P$ in position $i$. Let $B$ be the simulations in $S$ that probed position $i$ and saw the majority value there. No edges in $G$ connect $A$ to $B$ (each pair of a vertex in $A$ and a vertex in $B$ represents two simulations that disagree on position $i$, and an edge only exists between simulations that agree on all their common probe positions). Additionally, $H$ does not contain all the possible edges from $A$ to $B$: $A$ is a large set and $B$, representing as it does the majority view for the value of position~$i$, is at least as large, so the set of all possible edges would form a biclique that is too large to exist in $H$. Therefore, there is at least one pair of a vertex in $A$ and a vertex in $B$ that is connected neither in $G$ nor in $H$; this missing edge in $G\cup H$ shows that $S$ cannot be a clique.
\end{proof}

\begin{theorem}
For all $\epsilon>0$,
unless \PNP, it is not possible to solve the paired approximation problem of clique and independent set to within an approximation ratio of $n^{1-\epsilon}$, where $n$ denotes the number of vertices of the input graph.
\end{theorem}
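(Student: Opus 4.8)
The plan is to perturb the graphs produced by Zuckerman's clique-inapproximability reduction with the deterministic Ramsey graphs of Lemma~\ref{lem:Ramsey}, in exactly the way Lemma~\ref{lem:noclique} was set up to support, and then to track parameters so that the inapproximability gap survives for the combined (paired) objective $\max(\omega(G),\alpha(G))$ --- the quantity a paired-approximation algorithm must approximate by returning a large clique or a large independent set. So I would fix the theorem's parameter $\epsilon$ together with a small constant $\beta$ to be pinned down at the end (any $\beta<\epsilon/3$ will do; $\beta$ plays the role of the ``$\epsilon$'' appearing in the discussion around Lemma~\ref{lem:noclique}). Running Zuckerman's reduction with free-bit fraction $\beta$ yields in polynomial time, from an instance of size $N$ of an NP-complete problem, a graph $G$ on $n\le 2^{(1+\beta)R}$ vertices with $R=\Theta(\log N)$ and proof length $\sigma$ polynomial in $N$, such that positive instances give $\omega(G)\ge 2^{R}$ and negative instances give $\omega(G)<2^{\beta R}$; by running a constant number of copies of each prover I would also arrange $2^{\beta R}\ge 2\sigma$, the hypothesis Lemma~\ref{lem:noclique} requires. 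Next I would invoke Lemma~\ref{lem:Ramsey} with the constant exponent $\beta/2$ to build, again in polynomial time, an $n$-vertex $O(n^{\beta/2})$-(biclique,independent) Ramsey graph $H$ on the vertex set of $G$; since $n\le 2^{(1+\beta)R}$, for all sufficiently large $R$ one has $O(n^{\beta/2})\le 2^{\beta R}$, so $H$ is in particular a $2^{\beta R}$-(biclique,independent) Ramsey graph. I would then set $G'=G\cup H$.

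The next step is to record the three facts that control $G'$. First, every independent set of $G'$ is an independent set of $H$, so $\alpha(G')<2^{\beta R}$ no matter which instance we started from. Second, a positive instance keeps the $2^{R}$-clique of accepting simulations of $G$, which is still a clique in $G'\supseteq G$, so $\omega(G')\ge 2^{R}$. Third, Lemma~\ref{lem:noclique} gives $\omega(G')<2^{2\beta R}$ for negative instances. Combining these, $\max(\omega(G'),\alpha(G'))\ge 2^{R}$ for positive instances while $\max(\omega(G'),\alpha(G'))<2^{2\beta R}$ for negative ones.

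Finally I would argue that a polynomial-time paired-approximation algorithm with ratio $n^{1-\epsilon}$ --- one that on every graph returns a clique or an independent set of size at least $\max(\omega(G'),\alpha(G'))\big/n^{1-\epsilon}$ --- would decide this NP-complete problem. On a positive instance its output has size at least $2^{R}/n^{1-\epsilon}\ge 2^{R}/2^{(1+\beta)(1-\epsilon)R}=2^{(1-(1+\beta)(1-\epsilon))R}$, which exceeds $2^{2\beta R}$ precisely because $(1+\beta)(1-\epsilon)<1-2\beta$ when $\beta<\epsilon/3$; on a negative instance every clique and every independent set of $G'$, hence the algorithm's output, has fewer than $2^{2\beta R}$ vertices. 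So comparing the size of the returned set against the threshold $2^{2\beta R}$ decides the instance in polynomial time, forcing $\mathrm{P}=\mathrm{NP}$.

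I expect the main obstacle to be exactly this parameter bookkeeping. Adding $H$ is what inflates the negative-case clique bound from $2^{\beta R}$ (Zuckerman's soundness) up to $2^{2\beta R}$ (Lemma~\ref{lem:noclique}), so one must verify that the surviving multiplicative gap $2^{(1-2\beta)R}$ still beats $n^{1-\epsilon}=2^{(1+\beta)(1-\epsilon)R}$ while, at the same time, the Ramsey-graph exponent ($\beta/2$ above), the free-bit fraction $\beta$, and the inequality $2^{\beta R}\ge 2\sigma$ are all simultaneously achievable, and the power-of-two rounding in Lemma~\ref{lem:Ramsey} and the $\Theta(\log)$ relation between $R$ and $n$ cost nothing asymptotically. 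These constraints do not actually conflict --- $\beta$ is a single small constant and $R$ may be scaled up as needed --- but arranging the inequalities so that they all hold at once is where the care goes.
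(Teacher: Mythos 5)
Your overall strategy (Zuckerman's reduction, overlaying the Ramsey graph $H$ of Lemma~\ref{lem:Ramsey}, invoking Lemma~\ref{lem:noclique}, and the $\beta<\epsilon/3$ bookkeeping) matches the paper's, but there is one genuine gap: you have mis-stated what a paired approximation algorithm guarantees, and this breaks the reduction on the graph $G'=G\cup H$ alone. A paired $\rho$-approximation does not return a set of size at least $\max(\omega,\alpha)/\rho$; it returns \emph{either} a clique of size at least $\omega/\rho$ \emph{or} an independent set of size at least $\alpha/\rho$, and the algorithm chooses which of the two guarantees to meet. On your $G'$ the independence number is below $2^{\beta R}$ for positive and negative instances alike (that is precisely what adding $H$ accomplishes), so on a positive instance the algorithm may legitimately output a one-vertex independent set --- it approximates $\alpha(G')$ within $n^{1-\epsilon}$, since $2^{\beta R}\le n^{1-\epsilon}$ --- and that output is indistinguishable from its output on a negative instance. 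Hardness of approximating the single objective $\max(\omega,\alpha)$ concerns a more demanding requirement on the algorithm and does not imply hardness of the either/or problem the theorem is about.

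The missing step is the symmetrization the paper performs: let $F$ be the disjoint union of $G\cup H$ and its complement. A clique of $F$ lies in one component and an independent set of one component is a clique of the other, so both $\omega(F)$ and $\alpha(F)$ are at least $\max(\omega(G\cup H),\alpha(G\cup H))$; hence in the positive case \emph{both} optima are at least $2^{R}$, while in the negative case both are $O(2^{2\beta R})$. Only then is the algorithm forced, whichever of the two answers it elects to give, to return a large set on positive instances and a necessarily small one on negative instances, and your threshold test at $2^{2\beta R}$ (with the same $\beta<\epsilon/3$ arithmetic, the factor of $2$ in the number of vertices being absorbed into the constants) goes through.
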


\begin{proof}
If an approximation algorithm did exist, we could use it to solve SAT in polynomial time, as follows: given an instance of SAT, use the reduction of~\cite{Zuc-STOC-06} to construct a graph $G$ with $n$ vertices (where $n$ is a polynomial in the input instance size) that has a clique of size $n^{1-\epsilon}$ if the input SAT instance is positive and no cliques larger than $n^\epsilon$ otherwise. Let $H$ be a (biclique,independent) Ramsey graph on $n$ vertices, and let $F$ be the $2n$-vertex disjoint union of $G\cup H$ and its complement. If the starting instance of SAT is positive, $F$ has clique number and independence number both $\Theta(n^{1-\epsilon})$, while if the starting instance of SAT is negative, $F$ has (by Lemma~\ref{lem:noclique}) clique number and independence number both $O(n^{2\epsilon})$. Thus, an approximation to either the clique number or the independence number in $F$ that is accurate to within a factor of $n^{1-\epsilon}/n^{2\epsilon}$ would allow us to determine whether $F$ is a positive instance. Substituting $\epsilon/3$ for $\epsilon$ gives the result.
\end{proof}

\section{TSP and MaxTSP}

\begin{theorem}
There exists a constant $\epsilon>0$ such that, unless \PNP, it is not possible to solve the paired approximation problem of $(1,2)$-TSP and $(1,2)$-MaxTSP to within an approximation ratio of $1+\epsilon$.
\end{theorem}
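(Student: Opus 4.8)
The plan is to first replace the MaxTSP member of the pair by an ordinary $(1,2)$-TSP instance, and then to build hard instances for the resulting paired problem from the known APX-hardness of $(1,2)$-TSP.

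For the first step, observe that any cyclic order $\pi$ of the $n$ vertices of a graph $G$ uses exactly $n$ consecutive pairs, say $a$ of them edges of $G$ and $n-a$ of them non-edges; its length as a tour in the $(1,2)$-metric of $G$ is then $2n-a$, while its length as a tour in the $(1,2)$-metric of the complement $\bar G$ is $2n-(n-a)=n+a$. Thus these two lengths always sum to $3n$, the order maximizing the first minimizes the second, and $\mathrm{MaxTSP}(G)=3n-L_{1,2}(\bar G)$. Hence a $(1+\epsilon)$-approximate MaxTSP tour for $G$, reinterpreted as a tour in $\bar G$, is a $(1+O(\epsilon))$-approximation of $L_{1,2}(\bar G)$ and conversely, so up to a bounded change in the approximation ratio the paired TSP/MaxTSP problem on $G$ coincides with the paired $(1,2)$-TSP problem on the two instances $G$ and $\bar G$ (this is the equivalence already mentioned in the statement). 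It therefore suffices to produce, for some constant $\delta>0$, an NP-hard promise problem whose yes-instances are graphs $G$ with $L_{1,2}(G)$ and $L_{1,2}(\bar G)$ both equal to $n$ and whose no-instances have $L_{1,2}(G)$ and $L_{1,2}(\bar G)$ both at least $(1+\delta)n$; then, for a small enough constant $\epsilon$, any tour approximating either member of the pair to within $1+\epsilon$ lies by its length alone in one of two disjoint ranges that reveal which case we are in.

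To obtain such a promise problem I would reduce from the gap version of $(1,2)$-TSP underlying the hardness result of Engebretsen and Karpinski~\cite{EngKar-ICALP-01} (following Papadimitriou and Yannakakis~\cite{PapYan-MOR-93}): for some constant $\epsilon_0>0$ it is NP-hard to distinguish $m$-vertex graphs $H$ with a Hamiltonian cycle, for which $L_{1,2}(H)=m$, from graphs with $L_{1,2}(H)\ge(1+\epsilon_0)m$, and the hard instances may be taken to have bounded degree. The obstacle is that ``$L_{1,2}(H)$ large'' is a sparseness phenomenon: the complement of a far-from-Hamiltonian bounded-degree graph is dense and hence Hamiltonian, so a direct reduction leaves one member of the pair trivial. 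The fix is to embed $H$ into a medium-density scaffold that behaves symmetrically under complementation. One natural candidate is a split graph: let $G$ have a clique part and an independent-set part of roughly $n/2$ vertices each, with the bipartite wiring between the two parts derived from $H$ so that $G$ has a Hamiltonian cycle precisely in the yes-case; if this wiring is arranged to be isomorphic to its own bipartite complement (or, more robustly, if $G$ is taken to be a fixed self-complementary combination of several copies of $H$ and $\bar H$), then $\bar G$ encodes the same gap-Hamiltonicity instance and inherits the same $n$ versus $(1+\delta)n$ dichotomy for $L_{1,2}(\bar G)$. Carrying this out is the step requiring the most care: one must check both that the yes-case genuinely yields Hamiltonian cycles in $G$ and in $\bar G$, and that in the no-case every tour of $G$ and every tour of $\bar G$ is forced to use $\Omega(n)$ distance-two steps --- for instance because $G$ inherits from $H$ either $\Omega(n)$ connected components (forcing long tours in $G$) or an independent set of size bounded above $n/2$, whose complementary clique in $\bar G$ forces long tours in $\bar G$.

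Granting such a construction, the theorem follows: a polynomial-time $(1+\epsilon)$-approximation to the paired TSP/MaxTSP problem, run on $G$, would produce a tour whose length decides in polynomial time whether $H$ was a yes- or a no-instance, so \PNP{}. Taking $\epsilon$ to be a sufficiently small positive constant multiple of $\epsilon_0$ (so that $\delta$ and all the resulting length ranges separate) gives the explicit constant claimed.
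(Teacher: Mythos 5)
Your opening reduction is correct and clean: for every cyclic order the $(1,2)$-lengths in $G$ and in $\bar G$ sum to $3n$, so $\mathrm{MaxTSP}(G)=3n-L_{1,2}(\bar G)$ and (using $\mathrm{MaxTSP}(G)\le 2n$ and $L_{1,2}(\bar G)\ge n$) a $(1+\epsilon)$-approximate MaxTSP tour is a $(1+2\epsilon)$-approximate minimum tour of $\bar G$. You have also correctly identified what the paired problem then requires: instances in which \emph{both} $L_{1,2}(G)$ and $L_{1,2}(\bar G)$ carry the hardness, since the algorithm may choose which of the two questions to answer. The gap is that you never actually produce such instances. The split-graph/self-complementary-wiring construction is only gestured at: you do not say how the bipartite wiring encodes $H$ so that the Hamiltonicity gap is preserved, why the yes-case yields Hamiltonian cycles in $G$ and in $\bar G$ simultaneously, or why the no-case forces $\Omega(n)$ distance-two steps in both. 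Your sketch of the no-case in fact contains an error: a clique in $\bar G$ does not force long tours in $\bar G$ (cliques are the easy case for the TSP); what would force long tours in $\bar G$ is an independent set of $\bar G$ of size $n/2+\Omega(n)$, i.e.\ a clique of that size in $G$, and nothing in your construction guarantees such a structure in the no-case. Since this construction is exactly the content of the theorem, the proof is incomplete as it stands.

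For comparison, the paper sidesteps the need for any self-complementary gadget with a direct construction: from an $n$-vertex $(1,2)$-TSP instance $G$ with optimum $n+x$ it forms $H$ as the disjoint union of $G$ with the join of $\bar G$ and an $(n-1)$-vertex clique. Then $L_{1,2}(H)=3n+x-1$ while $\mathrm{MaxTSP}(H)=6n-x-5$ (equivalently, by your identity applied to the $(3n-1)$-vertex graph $H$, $L_{1,2}(\bar H)=3n+x+2$), so \emph{both} members of the pair determine $x$ up to an additive $O(\epsilon n)$; the reduction is approximation-preserving and yields the explicit constant $1+\frac{1}{4440}$ from the $1+\frac{1}{740}$ hardness of $(1,2)$-TSP. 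Note that the paper's $H$ is precisely a graph of the kind your plan calls for ($L_{1,2}(H)$ and $L_{1,2}(\bar H)$ both tracking $x$), obtained without arranging any self-complementarity; supplying such a construction, or an equivalent one, is what your argument still needs.
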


\begin{proof}
We use an approximation-preserving reduction from $(1,2)$-TSP; therefore, let $G$ be an $n$-vertex graph defining a $(1,2)$-TSP instance. From $G$ we define a $(3n-1)$-vertex graph $H$ from the union of $G$, the complement of $G$, and an $(n-1)$-vertex clique; we add edges connecting every vertex in the clique to every vertex in the complement of $G$, and do not add any edges from $G$ itself to the other two components.

Suppose that the $(1,2)$-TSP defined from $G$ has an optimal solution with length $n+x$ (and, for simplicity of exposition, assume $x>0$). Then the optimal $(1,2)$-TSP in $H$ has length $3n+x-1$: it consists of an optimal path in $G$ of length $n+x-2$, a Hamiltonian path through the other two components of length $2n-3$, and two edges of total length $4$ connecting these two paths. The optimal $(1,2)$-MaxTSP in $H$ has length $6n-x-5$: it consists of the path in the complement of $G$ that would be optimal as a $(1,2)$-TSP in $G$ and that has length $2n-x-1$, a path through $G$ and the clique that uses no edges of $H$ and has total length $4n-6$, and two edges of total length $2$ connecting these two paths.

Thus, if we could approximate the paired problem within $1+\epsilon$ in polynomial time, we could approximate $(1,2)$-TSP within $1+6\epsilon$ in the same time. Since $(1,2)$-TSP is not approximable better than $1+\frac{1}{740}$ unless \PNP~\cite{EngKar-ICALP-01}, the paired problem is not approximable better than $1+\frac{1}{4440}$.
\end{proof}

\raggedright
\bibliographystyle{abbrv}
\bibliography{paired}
\end{document}